\newcommand{\dom}{\mathrm{dom}}
\newcommand{\Ax}{\mathit{Ax}}
\newcommand{\LTL}{\mathrm{LTL}}
\newcommand{\CTL}{\mathrm{CTL}}
\newcommand{\QCTL}{\mathrm{QCTL}}
\newcommand{\PLTL}{\mathrm{PLTL}}
\newcommand{\ATL}{{\rm ATL}}
\newcommand{\infin}{\mathrm{inf}}
\newcommand{\fin}{\mathrm{fin}}
\newcommand{\Subf}{\mathrm{Subf}}
\newcommand{\SL}{\mathrm{SL}}
\newcommand{\sem}[1]{[\![#1]\!]}
\newcommand{\atlD}[1]{\langle\!\langle#1\rangle\!\rangle}
\newcommand{\angles}[1]{\langle#1\rangle}
\newcommand{\atlB}[1]{[\![#1]\!]}
\newcommand{\past}{{\Diamond\hspace{-0.8em}\mbox{\raisebox{0.2ex}[1.4ex]{$-$}}}}
\newcommand{\necpast}{\mathbf{\boxminus}}
\newcommand{\yesterday}{\mathbf{\ominus}}
\def\Next{{\mathop{\mbox{\raisebox{1.7pt}{$\scriptstyle\bigcirc$}}}}}
\newcommand{\until}[2]{{({#1}{\mathsf U}{#2})}}
\newcommand{\since}[2]{{({#1}{\mathsf S}{#2})}}
\newcommand{\I}{{\mathsf{I}}}
\newcommand{\Act}{{\mathit{Act}}}
\newcommand{\Var}{{\mathrm{Var}}}
\newcommand{\Cl}{{\mathit{Cl}}}
\newcommand{\Ag}{{\Sigma}}
\newcommand{\outcome}{{\mathsf{out}}}
\def\defeq{{\,\hat{=}\,}}
\newcommand{\parag}[1]{\vspace{2mm}\noindent{\bf #1}\ }
\def\crat{{\widetilde{rat}}}
\newcommand{\partition}[1]{{\mathsf{part}(#1)}}
\newcommand{\bm}[1]{\mathbf{#1}}
\newcommand{\oomit}[1]{}
\newtheorem{all}{Proposition}
\newtheorem{lemma}[all]{Lemma}
\newtheorem{theorem}[all]{Theorem}
\newtheorem{definition}[all]{Definition}}
\newtheorem{example}[all]{Example}}
\newenvironment{proof}{\noindent
{\bf Proof:}}{$\dashv$}
\newenvironment{proof*}{\noindent
{\bf Proof:}}{$\dashv$}
\def\maxval{\mathsf{optimal}}
\def\doubleNotLess{{\,\hspace{0.2em}\not\hspace{-0.2em}\nless\,}}
\newcommand{\SPTC}{\mathrm{SPTC}}
\title{Reasoning about Temporary Coalitions and $\LTL$-definable\\
Ordered Objectives in Infinite Concurrent Multiplayer Games\\
{\small Extended Abstract}}
\author{Dimitar P. Guelev
\institute{Institute of Mathematics and Informatics\\
Bulgarian Academy of Sciences, Sofia, Bulgaria}
\email{gelevdp@math.bas.bg}
}
\begin{document}
\maketitle

\begin{abstract}
We propose enhancing the use of propositions for denoting decisions and strategies as established in temporal languages such as $\CTL^*$, if interpreted on concurrent game models. The enhancement enables specifying varying coalition structure. In quantified $\CTL^*$ this technique also enables quantifying over coalition structure, and we use it to quantify over an extended form of strategy profiles which capture temporary coalitions. We also extend $\CTL^*$ by a temporal form of a binary preference operator that can be traced back to the work of Von Wright. The resulting extension of quantified $\CTL^*$ can be used to spell out conditions on the rationality of behaviour in concurrent multiplayer games such as what appear in solution concepts, with players having multiple individual objectives and preferences on them, and with the possibility to form temporary coalitions taken in account. We propose complete axiomatisations for the extension of $\CTL^*$ by the temporal preference operator. The decidability of the logic is not affected by that extension. 
\end{abstract}

\noindent{\bf keywords:} strategic ability, temporary coalitions, rational synthesis, preference, concurrent multiplayer games.

\section*{Introduction}

In Alternating-time Temporal Logic ($\ATL$, \cite{AHK97,AHK02}), $\atlD{\Gamma}\varphi$ means that a joint strategy for the players from $\Gamma$ exists which, if implemented, would cause the (rest of the) play to satisfy temporal condition $\varphi$. Willfully carrying out a strategy to enforce a temporal property which refers to the entire future requires $\Gamma$ to act as a {\em permanent} coalition with that property as the {\em shared} objective and the joint strategy as the member players' lifetime agenda. This means that $\ATL$'s game-theoretic operator, if regarded as a pattern for statements, is an immediate match for statements on the strategic ability of permanent coalitions wrt shared long term objectives. Strategy Logic ($\SL$, \cite{DBLP:conf/fsttcs/MogaveroMV10}) is built around a construct which expresses that implementing a given global strategy profile would enforce a temporal property that similarly refers to the entire future of the play. This is not to say that permanent coalitions are the limit of expressive power for $\ATL$ and $\SL$ in some ultimate sense. The skillful use of logical connectives, superposition and quantification can certainly be relied on to achieve a variety of other statements about strategic ability. 

In this paper we consider a setting where Concurrent Game Models (CGMs) and the temporal sublanguage take their established roles much like in $\ATL$, but aiming to formulate statements about strategic ability and rational behaviour on behalf of players who are prepared to enter {\em temporary} coalitions while in pursuit of their {\em individual} long term objectives, without having to combine those objectives into long term common ones such as what the plain use of the constructs offered in $\ATL$ and $\SL$ appears to be best suited for. To this end we fall back on propositionally quantified $\CTL^*$ (henceforth $\QCTL^*$) as the temporal logic language while keeping CGMs as the semantics. We extend the established way of using propositional variables for naming strategies and expressing quantification over strategies in $\QCTL^*$ to provide notation for speculating about the varying coalition structure and the 'short term' agendas of temporary coalitions combined.

To enable judgements which link the viability of temporary coalitions and their agendas to the players' individual long term objectives, we assume that the preferences of each player define a partial order on plays, and extend $\CTL^*$ by a temporal binary preference operator: $\varphi<_i\psi$ means that $i$ prefers plays which satisfy $\varphi$ to plays satisfying $\psi$. The operator can be viewed a temporal extension of the abstract binary preference operator which can be traced back to the work of Von Wright \cite{vonWrightPreference}. Interestingly, the possibility to specify step-by-step progress towards objectives, and the evolution of objectives in the course of their partial fulfillment or forfeiture admit a rather straightforward expression using this operator as an addition to $\CTL^*$ in combinations with $\LTL$'s guarded normal form and the separated normal form of $\LTL$ formulas. The interaction is encoded in some axioms for $<_i$ we propose as part of the second of two complete axiomatisations for it. The first axiomatisation enables the elimination of $<_i$, in case the preferences of every player are suitably expressed by some finite list of comparisons between $\LTL$-definable sets of plays. The second axiomatisation makes no such assumption; indeed its completeness proof, which is not included in this extended abstract, entails that a formula in $\CTL^*$ with $<_i$ is satisfiable only if so wrt the preferences of the players being expressible this way. Since the possibility to work towards objectives in multiple steps is specific to those objectives being temporal, we believe that this temporal species of a binary preference operator is of some interest on its own too. 

Meta-theoretical results on {\em temporalization} were first obtained in \cite{DBLP:journals/jolli/FingerG92}. More general ways for combining logics were later proposed by Gabbay and others. Our extension of $\ATL^*$ with $<$ is broadly compatible with that framework. In the sense of \cite{DBLP:journals/igpl/GabbayS98}, it can be viewed as a kind of {\em fusion} of $\ATL^*$ and Preference Logic as in \cite{van2005preference}, except that the state space we interpret Preference Logic in CGMs on is not the same as that for $\ATL^*$ but is the set of the infinite plays in the considered CGM. 

Finding strategy profiles which meet the requirements of solution concepts in general is referred to as {\em rational synthesis}. For {\em temporal} objectives, rational synthesis was proposed in \cite{DBLP:conf/tacas/FismanKL10} and further investigated in \cite{DBLP:journals/amai/KupfermanPV16} with the focus on rational behaviour in the sense of Nash equilibrium wrt the objectives of some finite number of players, with the additional condition that the objective of a distinguished player, the {\em system}, is achieved. One key observation about solution concepts with coalitions is that players can use their knowledge of the actions to be taken by their allies when deciding whether to enter a coalition and embrace its agenda. Non-members are just assumed to be acting rationally wrt their respective preferences and the adopted solution concept. With temporary coalitions, this includes perpetually revising the coalition structure. In this paper, we formulate temporary coalition variants of the concepts of the {\em game core} and {\em dominant strategies} to demonstrate the use of the  notation we are proposing. We do not claim these variants to be the unique options for extending these solution concepts. Indeed, solution concepts being so varied is what justifies studying logical languages which are both versatile and amenable to algorithmic methods. Results on rational synthesis after \cite{DBLP:conf/tacas/FismanKL10} in the literature are now commonly derived just from the possibility to express the solution concepts in a decidable subset of $\SL$ too. Of course, optimality requires looking at algorithms which are taylored for specific concepts; cf. e.g., \cite{DBLP:journals/corr/BouyerBMU15,DBLP:conf/fossacs/BouyerBMU12} on Nash equilibrium. 

\oomit{
We do not consider a form of {\em Nash equilibrium} for temporary coalitions; for this concept to work, advance agreement on the equilibrium to be implemented appears to be necessary, and this implies a grand permanent coalition of sorts to be in place, with the solution concept being about its resilience. This renders the meaningfulness of temporary coalitions questionable. The study of {\em $k$-resilience} \cite{DBLP:conf/gamesec/Halpern11}, which is about group defections from Nash equilibria, can count as mildly touching on the possible interaction of that concept with temporary coalitions. 
} 

\oomit{
In reality temporary coalitions require negotiation to precede joint action. Interleaving negotiation steps and representing negotiation in terms of individual players' actions is essential for deterministic individual strategies to be possible to use. However, negotiation affects mostly  {\em mental} states and is often confidential. This calls for an incomplete information setting, well beyond the scope of this paper. With complete information, we cannot but go for solution concepts which abstract negotiation away and only assume that a rational partitioning of the players into coalitions, possibly a different one at each step, is always achieved with rationality according to some solution concept. The rationality of participating in a temporary coalition depends on the gains upon dissolving the coalition. This makes {\em backward induction} very intuitive for temporary coalitions. In our axioms about {\em backward induction} we allow for plays to be partly terminating and partly non-terminating, because of the multiplicity of the players objectives, and conclusiveness of what can be derived using the axioms can be similarly partial.
}

\oomit{

Our modelling incorporates a binary preference operator $\varphi<\psi$, {\em a play satisfying $\varphi$ is preferable to one satisfying $\psi$}, which can be traced back to \cite{vonWrightPreference} and, in various non-temporal forms has been studied in more recent works, cf. e.g. \cite{DBLP:conf/kramas/DegremontK08,Lorini2010}. Our main tool is an established model transformation. In logics for agency, it was used to demonstrate the equivalence between two semantics of $\ATL$ in \cite{GJ04} and later, for other purposes, in our works \cite{DBLP:conf/clima/GuelevD12} and \cite{DBLP:journals/corr/abs-1303-0794}. It is similar to the transformation of finite automata with an input alphabet into so-called {\em state-based} automata, where input letters label the target states of transitions, and not the transitions themselves. The transformation facilitates the naming of strategy profiles in the temporal language, and the lack of names for strategies is the main disadvantage of $\ATL$s in comparison with $\SL$s. Temporal languages have been extended to allow naming decisions and memoryless strategies in other ways in \cite{DBLP:conf/atva/Pinchinat07,DBLP:conf/concur/LopesLM12} too. 

}

\section{Preliminaries}
\parag{Concurrent Game Models} 
We consider CGMs $M=\angles{W,w_I,\angles{ \Act_i:i\in\Ag},o,V}$ for some given sets of {\em players} $\Ag=\{1,\ldots,N\}$ and atomic propositions $AP$. $W$ is the set of {\em states}, $w_I\in W$ being the {\em initial} state, $\Act_i$ is the set of {\em actions} of player $i\in\Ag$, $\Act_\Gamma\defeq\prod\limits_{i\in\Gamma}\Act_i$, $o:W\times\Act_\Ag\rightarrow W$ is the {\em outcome} function, and $V\subseteq W\times AP$ is the {\em valuation} relation. Given a $\bm{w}\in W^+$, we put 
\[R^\infin_{M}(\bm{w})\defeq\{\bm{v}\in W^\omega:\bm{v}^0\ldots\bm{v}^{|\bm{w}|-1}=\bm{w},(\forall k<\omega)(\exists a\in\Act_\Ag)(\bm{v}^{k+1}=o(\bm{v}^k,a))\}.\]
for the set of all the infinite plays in $M$ which are continuations of $\bm{w}$. The set $R^\fin_{M}(\bm{w})\subseteq W^+$ of the {\em finite} continuations of $\bm{w}$ is defined similarly. 

Given a $p\in AP$ and an $X\subseteq W$, $M^X_p$ denotes the CGM $\angles{W,w_I,\angles{ \Act_i:i\in\Ag},o,V^X_p}$ where $V^X_p(w,p)\leftrightarrow w\in X$ and $V^X_q(w,q)\leftrightarrow V(w,q)$ for $q\in AP\setminus\{p\}$.

In the sequel we tacitly assume an arbitrary fixed finite CGM $M$ with its components named as above.

\parag{Strategies}
A (total deterministic) {\em strategy} for player $i\in\Ag$ is a function of type $W^+\rightarrow\Act_i$. A {\em strategy profile for $\Gamma\subseteq\Ag$} is a tuple of strategies $\bm{s}=\angles{\bm{s}_i:i\in \Gamma}$, one for every member $i$ of $\Gamma$. 
We denote the set of all strategy profiles for $\Gamma$ by $S_\Gamma$.
Strategies as above apply to permanent coalitions. 
\oomit{In $\ATL^*$, $\atlD{\Gamma}\varphi$ means that there exists an $\bm{s}\in S_\Gamma$ such that all the infinite plays in which the members of $\Gamma$ act as prescribed by $\bm{s}$ satisfy the temporal condition $\varphi$. $\ATL^*$ model checking includes finding such $\bm{s}$, which can also be viewed as programs for $\Gamma$ that satisfy $\varphi$ as the specification.
}

\parag{Alternating-time Temporal Logic ($\ATL$, \cite{AHK97,AHK02})}
This is now a group of logics with the common defining feature being the game-theoretic modalities $\atlD{\Gamma}$ where $\Gamma\subseteq\Ag$. $\ATL^*$ syntax includes {\em state} formulas $\varphi$ and {\em path} formulas $\psi$. Path formulas are $\PLTL$ ($\LTL$ with past) formulas with $\ATL^*$ state formulas allowed to appear along with the atomic propositions. We use $\ATL^*$ with linear past. A recent study on $\ATL$s with linear past, with comprehensive references on the overall topic can be found in \cite{DBLP:journals/tcs/BozzelliMS20}.
The syntax of $\ATL^*$ with past can be defined by the BNFs:
\[\begin{array}{l}
\varphi::=\bot\mid p\mid\varphi\Rightarrow\varphi\mid\atlD{\Gamma}\psi\qquad
\psi::=\varphi\mid\psi\Rightarrow\psi\mid\Next\psi\mid\until\psi\psi\mid\yesterday\psi\mid\since\psi\psi
\end{array}\]
As it becomes clear below, the past operators facilitate reference to the players' objectives as formulated wrt the beginning of a play.
$\ATL^*$ semantics on CGMs comes in several variants; we choose the one in which a CGM $M$ and a finite play $\bm{w}\in R^\fin_M(w_I)$ need to be specified to define $\models$ on state formulas, and an infinite play, and a position $k<\omega$ in it are needed for path formulas. Given a $\Gamma\subseteq\Ag$, a $\bm{w}\in R^\fin_M(w_I)$ and an $\bm{s}\in S_\Gamma$, the set of the infinite continuations of $\bm{w}$ in which the players from $\Gamma$ act according to $\bm{s}$ is:
\[\outcome(\bm{w},\bm{s})\defeq\{\bm{v}\in R^\infin_M(\bm{w}):(\forall k\geq|\bm{w}|)(\exists \bm{b}\in\Act_{\Ag\setminus\dom \bm{s}})
(\bm{v}^k=o(\bm{v}^{k-1},\bm{s}(\bm{v}^0\ldots\bm{v}^{k-1})\cup \bm{b}))\}.\]
The definining clauses for $\models$ are as follows: 
{\small
\[\begin{array}{lcl}
M,\bm{w}\models p &\mbox{iff}& V(\bm{w}^{|\bm{w}|-1},p)\mbox{ for }p\in AP\\
M,\bm{w}\models \atlD{\Gamma}\varphi &\mbox{iff}& \mbox{there exists an }\bm{s}\in S_\Gamma
\mbox{ s. t. }M,\bm{v},|\bm{w}|-1\models\varphi \mbox{ for all }\bm{v}\in\outcome(\bm{w},\bm{s})\\
M,\bm{w},k\models\Next\varphi&\mbox{iff}&M,\bm{w},k+1\models\varphi\\
M,\bm{w},k\models\yesterday\varphi&\mbox{iff}&k>0\mbox{ and }M,\bm{w},k-1\models\varphi\\
M,\bm{w},k\models\until\varphi\psi&\mbox{iff}&\mbox{for some }n<\omega,\ M,\bm{w},k+n\models\psi \mbox{ and }M,\bm{w},k+m\models\varphi\mbox{ for all }m<n.\\
M,\bm{w},k\models\since\varphi\psi&\mbox{iff}&\mbox{for some }n\leq k,\ M,\bm{w},k-n\models\psi \mbox{ and }M,\bm{w},k-m\models\varphi\mbox{ for all }m<n.
\end{array}\]
}

\noindent
The clauses about $\bot$, $\top$ and $\Rightarrow$ are as usual; $\vee$, $\wedge$ and $\Leftrightarrow$ and the derived $\PLTL$ operators $\Diamond$, $\Box$, $\necpast$, $\past$, etc. are defined as usual too. The beginning of time is indicated by $\I\defeq\neg\yesterday\top$. The dual $\neg\atlD{\Gamma}\neg$ of $\atlD{\Gamma}$ is written $\atlB{\Gamma}$. $\ATL$'s $\atlD{\emptyset}$ and $\atlB{\emptyset}$ are equivalent to $\CTL$'s $\forall$ and $\exists$, respectively,  the genetic link between $\ATL$ and $\CTL$. In the sequel, for the sake of brevity, we write $\forall$ and $\exists$ for $\atlD{\emptyset}$ and $\atlB{\emptyset}$, respectively. For $\Gamma\subseteq\Ag$, $-\Gamma\defeq\Ag\setminus\Gamma$. 

For future state formulas $\varphi$, we put $\sem{\varphi}_M\defeq\{\bm{v}^{|\bm{v}|-1}:M,\bm{v}\models\varphi\}$. For future state $\varphi$, $M,\bm{v}\models\varphi$ does not impose conditions on
$\bm{v}^0\ldots\bm{v}^{|\bm{v}|-2}$. Therefore $M,\bm{v}\models\varphi$ is equivalent to $\bm{v}^{|\bm{v}|-1}\in \sem{\varphi}_M$ for such formulas.

We write $\Var(\varphi)$ and $\Subf(\varphi)$ for the set of the atomic propositions which occur in formula $\varphi$ and the set of $\varphi$'s subformulas, including $\varphi$ itself, respectively.

\oomit{
$\ATL$ is the subset of $\ATL^*$ in which the syntax of path formulas is restricted to 
\[
\varphi::=\Next\psi\mid\until\psi\psi\mid\Box\psi
\]
}

\parag{Unwinding CGMs}
Given a CGM $M$ as above, the {\em unwinding} $M^T\defeq\angles{W^T,w^T_I,\angles{\Act_i:i\in\Ag},o^T,V^T}$ of $M$ is defined as follows:
\[\begin{array}{lcllcl}
W^T & \defeq & W(\Act_\Ag\ W)^* & o^T(w^0\bm{a}^1\cdots \bm{a}^nw^n, \bm{b})&\defeq&w^0\bm{a}^1\cdots\bm{a}^nw^n\bm{b}o(w^n,\bm{b})\\
w^T_I & \defeq & w_I & V^T(w^0\bm{a}^1\cdots \bm{a}^nw^n,p) & \defeq & V(w^n,p)
\end{array}\]
$M^T$ and $M$ are bisimilar and $R_M^\infin(w_I)$ and $R_{M^T}^\infin(w^T_I)$ are isomorphic. Importantly, $o^T$ is invertible.

\parag{$\mathbf{QCTL}^*$}
The use of $\exists$ and $\forall$ as abbreviations for $\ATL$'s $\atlB{\emptyset}$ and $\atlD{\emptyset}$, respectively, gives rise to a sublanguage of $\ATL^*$ (with past) which coincides with the language of $\CTL^*$, for the same vocabulary. The interpretation of $\exists$ and $\forall$ which is entailed by their use as abbreviations in CGMs $M$ is consistent with the same mode of referencing the past in the $\CTL^*$ definitions of their {\em structure} semantics wrt models $\angles{W,w_I,R,V}$ where $W$, $w_I$, and $V$ are as in the CGM $M$, and the transition relation $R$ is defined using the CGM's outcome function $o$ by the equivalence $R(w,v)\leftrightarrow(\exists\bm{a}\in\Act_\Ag)(v=o(w,\bm{a}))$. 

Along with its structure semantics, $\QCTL^*$ admits also a {\em tree} semantics, which can be spelled out using the same defining clauses, except that $M^T$, the unwinding of the given model $M$, appears on the LHS of $\models$. The tree semantics differs from the structure one only wrt propositional quantification.
In the structure semantics,
\[\begin{array}{lcl}
M,\bm{w}\models\exists p\varphi&\mbox{iff}&\mbox{there exists an }X\subseteq W\mbox{ s. t. }
M^X_p\models\varphi.
\end{array}\]
With the values of quantified variables being varied on the states of the original $\angles{W,w_I,R,V}$, the recurrence of states along paths entails corresponding repeats of the variables' values. Unwindings display no such dependency as repeated occurrences of the original model's states along paths are replaced by distinct states, and the condition $X\subseteq W$ becomes replaced by $X\subseteq W^T$. In this work we refer mostly to unwindings, which can be regarded as using the tree semantics of arbitrary CGMs too. Model-checking and validity in $\QCTL^*$ are decidable on the class of tree-based Kripke models \cite{DBLP:conf/ausai/French01,French06} such as what underly unwindings $M^T$.

\section{Strategy Profiles with Temporary Coalitioning and a Vocabulary for Temporary Coalitions}

In this section we introduce our propositional vocabulary for making statements about strategic ability with the co-existence of temporary coalitions taken in account. In doing so, we build on a technique which enables the use of propositional variables for the naming of sets of transitions in modal languages. In $\QCTL^*$, along with the naming of strategies, this technique enables the expression of quantification over strategies on CGMs. In our approach, dedicated collections of propositional variables specify both sets of decisions in the established way and the coalition structure in place upon every transition. We upgrade the underlying semantic notion of strategy profile to incorporate coalition structure first. Let $\partition{\Gamma}$ stand for the set of the exhaustive partitions of $\Gamma$ into disjoint nonempty subsets.
\begin{definition}\label{sptcDefinition}
Assuming a CGM $M$ as usual, and a {\em strategy profile with temporary coalitions} ($\SPTC$) is a function of type $W^+\rightarrow\Act_\Ag\times\partition{\Ag}$.
\end{definition}
Informally, given a finite play, a $\SPTC$ specifies both a decision on how to continue the play and a partitioning of $\Ag$ into coalitions, the competing collective authors of that decision. Now let us explain how our vocabulary for $\SPTC$ builds on the conventions for naming decisions in a propositional temporal language.

\parag{Naming decisions and strategies by propositions in $\QCTL^*$} works as follows. In an arbitrary CGM $M$ with its components named as above, propositions $p$ define the sets of states $\sem{p}\defeq\{w\in W: V(w,p)\}$. In the unwinding $M^T$ of a CGM $M$, the invertibility of $o^T$ entails that any subset $S$ of $W^T\times\Act_\Ag$ can be recovered from the corresponding $\{o(w,\bm{a}):\angles{w,\bm{a}}\in S\}$, which is a set of states, and therefore can be denoted by a proposition. Finite plays can be determined from their last states in $M^T$, which renders memorylessness of strategies vacuous. Hence any strategy can be determined unambiguously from the set of the target states of the transitions it generates. A proposition which is true in states that are reachable by transitions that are consistent with the strategy and false elsewhere can serve to name the strategy. Hence $\exists s(\delta_\Gamma(s)\wedge\forall\Next(s\Rightarrow\varphi))$, a $\QCTL^*$ formula, in which $\delta_\Gamma(s)$ presumedly restricts $s$ to range over those $S\subseteq W^T\times\Act_\Ag$ which fit the description of strategies for $\Gamma$, expresses that $\Gamma$ can enforce $\varphi$ by implementing $s$ for one step. 

This observation combines beneficially with the facts that $\QCTL^*$ is decidable on tree-based Kripke models (aka execution trees) which are the unwindings of finite Kripke models \cite{DBLP:conf/ausai/French01,French06}, and, unsurprisingly, finite CGMs reduce to finite Kripke models, if the transition relation is defined by putting $R(w,v)\leftrightarrow(\exists\bm{a}\in\Act_\Ag)(v=o(w,\bm{a}))$, as mentioned in the Preliminaries section.

\parag{Specifying evolving coalition structure}
We specify evolving coalition structure by enhancing the above use of propositions. The propositional variables which we introduce serve to both name decisions and specify the coalition structure in place upon carrying out the decisions. Specifying a $\SPTC$ $s$ takes a collection $\bm{s}\defeq\angles{\bm{s}_\Gamma:\Gamma\subseteq\Ag}$ of variables. The extension $(M^T)_{\angles{\bm{s}_\Gamma:\Gamma\subseteq\Ag}}^{\angles{X_\Gamma(s):\Gamma\subseteq\Ag}}$ of $M^T$ in which the variables from $\bm{s}$ specify $s$ is defined by putting
\begin{equation}\label{xGamma}
X_\Gamma(s)\defeq\{o^T(\bm{w}^{|\bm{w}|-1},\bm{a}|_\Gamma\cup\bm{b}):\bm{w}\in R^\fin_{M^T}(w_I^T),\angles{\bm{a},C}= s(\bm{w}),\Gamma\in C,\bm{b}\in\Act_{-\Gamma}\}.
\end{equation}

This means that $(M^T)_{\angles{\bm{s}_\Gamma:\Gamma\subseteq\Ag}}^{\angles{X_\Gamma(s):\Gamma\subseteq\Ag}},\bm{v}\models\bm{s}_\Gamma$ indicates that $\bm{v}^{|\bm{v}|-1}$ can be reached from $\bm{v}^{|\bm{v}|-2}$ by $\Gamma$'s part of the $c$-decision for $\bm{v}^0\cdots\bm{v}^{|\bm{v}|-2}$, with the express condition that $\Gamma$'s part of this decision is a decision of $\Gamma$ {\em as a coalition}, and not a coincidental collection of unrelated decisions by the players from $\Gamma$.  

By this convention we can write, e.g., $\exists z(\delta_\Gamma(z)\wedge\forall\Next(z\wedge\bm{s}_{-\Gamma}\Rightarrow\varphi))$ to express that the members of $\Gamma$ can enforce $\varphi$ provided that the rest of the players make a $\bm{s}_{-\Gamma}$-move {\em as a coalition}. This condition on $-\Gamma$'s concurrent move may be of some consequence, if, e.g., the $\SPTC$ denoted by $\bm{s}$ meets the requirements a solution concept may be imposing, including requirements on the viability of coalitions. Then the $-\Gamma$ decisions which lead to $\bm{s}_{-\Gamma}$-states would range over those single step coalition agendas which, according to the adopted solution concept, appear to be sufficiently attractive to unite $-\Gamma$. 

In the rest of the paper $\bm{X}(s)$ abbreviates $\angles{X_\Gamma(s):\Gamma\subseteq\Ag}$; hence $(M^T)_{\bm{s}}^{\bm{X}(s)}$ stands for $(M^T)_{\angles{\bm{s}_\Gamma:\Gamma\subseteq\Ag}}^{\angles{X_\Gamma(s):\Gamma\subseteq\Ag}}$. As it becomes clear below, expressing comparisons between alternative $\SPTC$ may take $\CTL^*$ formulas which refer to more than one such system of variables of the form $\angles{\bm{s}_\Gamma:\Gamma\subseteq\Ag}$.

\parag{The apparent lack of semantics for coalition structure}
Note that, unlike decisions and strategies, coalition structure cannot be observed by examining the (global) transitions themselves in CGMs.\oomit{This is in agreement with the fact that a decision may be wise if chosen in coalition, and be reckless, however lucky, if commenced as the coincidental combination of individual decisions by the same players.} Therefore it is crucial to notice that the intended meaning of these dedicated variables is achieved only in extensions of $M^T$ by dedicated valuations for these variables, or by their {\em bound occurrences}, with the quantification also indicating that the coalitioning in question is {\em hypothetical}. This convention leads to averting the perceived necessity to upgrade CGMs for registering coalition structure by anything more than the valuation for the dedicated propositional variables introduced above.

\parag{Constraining the vocabulary to express well-formed coalition structure}
For a system $\bm{s}\defeq\angles{\bm{s}_\Gamma:\Gamma\subseteq\Ag}$ of variables to really specify how $\Ag$ is partitioned into disjoint coalitions, $\bm{s}$ must assign every player to a unique coalition, with the decisions of unallied players $i$ being denoted by the respective $\bm{s}_{\{i\}}$. Along with expressing that the reference state is the target of a transition in which all the decisions were $\bm{s}$-decisions, the formulas $\widetilde{\bm{s}}_\Gamma$ below express that each of the players from $\Gamma$ belongs to a unique coalition within $\Gamma$ and, furthermore, no coalitions include players from both inside and outside $\Gamma$. The following equivalence defines $\widetilde{\bm{s}}_\Gamma$, $\Gamma\subseteq\Ag$, recursively:
\begin{equation}\label{crat}
\widetilde{\bm{s}}_\Gamma
\Leftrightarrow
\bigwedge\limits_{\Gamma'\subseteq\Ag,\Gamma'\cap\Gamma\not=\emptyset,\Gamma'\setminus\Gamma\not=\emptyset}\neg \bm{s}_{\Gamma'}
\wedge 
\bigg(\bm{s}_\Gamma\wedge\bigwedge\limits_{\emptyset\subsetneq\Delta\subsetneq\Gamma}\neg \bm{s}_\Delta\vee\bigvee\limits_{\emptyset\subsetneq\Delta\subsetneq\Gamma}(\widetilde{\bm{s}}_\Delta\wedge\widetilde{\bm{s}}_{\Gamma\setminus\Delta})
\bigg).
\end{equation}
Since $-\Ag=\emptyset$, $\widetilde{\bm{s}}_\Ag$ only states that $\bm{s}$ is assigning every player to a unique coalition.\footnote{We prefer the above recursive definition of $\widetilde{\bm{s}}_\Gamma$ to a 'flat' encoding of the conditions on $\bm{s}$ as the recursive definition gives easy access to a useful corollary of the property, namely that either the whole of $\Gamma$ is a coalition, or some proper subset of its, e.g. some subset of a $\Delta$ from the RHS disjunct in the parentheses, is a coalition. Note that these conditions are not perfectly 'modular', as $\widetilde{\bm{s}}_\Gamma$ rules out coalitions which only overlap with $\Gamma$ and can extend among $\Ag\setminus\Gamma$ as well.} Next we spell out some more restrictions on $\bm{s}$ which further entail that a state which satisfies $\widetilde{\bm{s}}_\Ag$ ought to be the target state of a unique global decision, with this coalition structure in place. Namely, we restrict $\bm{s}_\Gamma$ to evaluate to the set of the target states of transitions with some specific $\bm{a}\in\Act_\Gamma$ as $\Gamma$'s projection of the involved global decision, in case $\Gamma$ appears in the current coalition structure. This is achieved by including in $AP$ names for the sets $\{o^T(w,\bm{a}):\bm{a}_i=a\}$ all $a\in\Act_i$, $i\in\Ag$. To enable this, assume that $\Act_i$, $i\in\Ag$, are pairwise disjoint subsets of $AP$, and $V^T$ is defined on them by the equivalences:
\[\begin{array}{l}
V^T(w^T_I,a)\leftrightarrow\bot,\qquad
V^T(w^0\bm{a}^1\cdots \bm{a}^{n+1}w^{n+1}, a)\leftrightarrow a=\bm{a}_i^{n+1}\mbox{ for all }a\in\Act_i,i\in\Ag.
\end{array}\]
In arbitrary concurrent game structures, the target sets of distinct actions by the same player need not be disjoint like in unwindings $M^T$ and recovering previous actions from their target states like above may be impossible. However, given an arbitrary finite CGM $M$, an expansion $\bar{M}$ of $M$ can be defined, which is still finite, has an outcome function which is invertible wrt decisions, and has unwinding $\bar{M}^T$ which is isomorphic to $M^T$, the unwinding of the original $M$. $\bar{M}\defeq\angles{\bar{W},\bar{w}_I,\angles{\Act_i:i\in\Ag},\bar{o},\bar{V}}$ can be viewed as extending $M$ to store (only) the latest global decisions in their target states, as opposed to $M^T$, which stores whole histories:
\[\begin{array}{lcllcl}
\bar{W}&\defeq&W\times(\Act_\Ag\cup\{*\}) &
\bar{V}(\angles{w,\bm{a}},p)&\defeq& V(w,p)\mbox{ for }p\in AP\setminus\bigcup\limits_{i\in\Ag}\Act_i \\
\bar{w}_I&\defeq&\angles{w_I,*} &
\bar{V}(\angles{w,\bm{a}},a)&\defeq&a=\bm{a}_i\mbox{ for }a\in\Act_i,i\in\Ag\\
\bar{o}(\angles{w,\bm{b}},\bm{a})&\defeq&\angles{o(w,\bm{a}),\bm{a}}
\end{array}
\]
That is why $M^T$, with the valuations of the action-naming atomic propositions as defined, for finite $M$, is still the unwinding of a finite CGM. We highlight this fact because it is relevant for the decidability of satisfaction of formulas in it. 

Given an $\bm{a}\in\Act_\Gamma$ for some $\Gamma\subseteq\Ag$, let $\hat{\bm{a}}\defeq\bigwedge\limits_{i\in\Gamma}\bm{a}_i$. Let
\[\tilde{\delta}(\bm{s})\defeq
\forall\Box\exists\Next\widetilde{\bm{s}}_\Ag\wedge
\bigwedge\limits_{\Gamma\subseteq\Ag}
\forall\Box\bigg(
\forall\Next\neg\bm{s}_\Gamma\vee
\bigvee\limits_{\bm{a}\in\Act_\Gamma}\forall\Next(\bm{s}_\Gamma\Leftrightarrow\hat{\bm{a}})
\bigg)
\]
For any collection $\bm{Y}\defeq\angles{Y_\Gamma:\Gamma\subseteq\Ag}$ of subsets of $W^T\setminus\{w_I^T\}$, it can be shown that $(M^T)_{\bm{s}}^{\bm{Y}},w_I^T\models\tilde{\delta}(\bm{s})$
iff there exists a $\SPTC$ $s$ for $M^T$ such that $\bm{Y}=\bm{X}(s)$ where $\bm{X}(s)$ is as in (\ref{xGamma}).

\parag{Some basic expressions in $\SPTC$ vocabularies}
Given a $\SPTC$ $s$, $(M^T)_{\bm{s}}^{\bm{X}(s)},\bm{w},0\models\Box\Next\widetilde{\bm{s}}_\Ag$ means that play $\bm{w}$ in $M^T$ is consistent with $s$. $\SPTC$ $s$ enforces $\varphi$, if
\begin{equation}\label{sEnforcesVarphi}
(M^T)_{\bm{s}}^{\bm{X}(s)},w_I^T\models\forall\bigg(\Box\Next\widetilde{\bm{s}}_\Ag\Rightarrow\varphi\bigg).
\end{equation}
$\SPTC$ $s$ enables $i\in\Ag$ to achieve $\varphi$, if
\begin{equation}\label{iFollowsStrategy}
(M^T)_{\bm{s}}^{\bm{X}(s)},w_I^T\models\forall\bigg(\Box\Next\bigvee\limits_{i\in\Gamma\subseteq\Ag}\bm{s}_\Gamma\Rightarrow\varphi\bigg).
\end{equation}
In words, if every step of a play is consistent with the agenda of the $s$-coalition where $i$ belongs, then that play is bound to satisfy $\varphi$, regardless of the agendas of the co-existing temporary coalitions. In plays which satisfy 
$\Box\Next\widetilde{\bm{s}}_\Ag$, $i$ belongs to exactly one coalition at any time. Importantly, it is not guaranteed that the hypothetical allies of $i$ would be really interested in participating in the designated coalitions.

\oomit{Obviously (\ref{iFollowsStrategy}) entails $(M^T)_{\angles{\bm{s}_\Gamma:\Gamma\subseteq\Ag}}^{\angles{X_\Gamma(s):\Gamma\subseteq\Ag}},w_I^T\models\atlD{\Gamma}\varphi$. To achieve equivalence with $\atlD{\Gamma}\varphi$, $s$ needs to be quantified away in $\forall(\Box\Next s\Rightarrow\varphi)$ and the valuation of $s$ must be restricted to range over the ($\bar{M}$- or $M^T$-)sets of states which represent total non-deterministic strategy profiles for $\Gamma$. In case the sets $\Act_i$ of the names of the actions of every player $i\in\Ag$ are fixed, this can be done as follows.

Assume that $AP$ and the sets $\Act_i$, $i\in\Ag$, are pairwise disjoint. Consider the vocabulary $AP'\defeq AP\cup\bigcup\limits_{i\in\Ag}\Act_i$ in which action names are also atomic propositions. Consider the unwinding $M^T$ of $M$ with its valuation $V^T$ extended to $AP'$ by the equivalences:
\[\begin{array}{l}
V^T(\bm{w}^0,a)\leftrightarrow\bot,\qquad
V^T(\bm{w}^0\bm{a}^1\cdots \bm{a}^{n+1}\bm{w}^{n+1}, a)\leftrightarrow a=\bm{a}_i^{n+1}\mbox{ for all }a\in\Act_i,i\in\Ag.
\end{array}\]
Given an $\bm{a}\in\Act_\Gamma$ for some $\Gamma\subseteq\Ag$, let $\hat{\bm{a}}\defeq\bigwedge\limits_{i\in\Gamma}\bm{a}_i$. 
\oomit{
Then the following formulas are valid in $M^T$ for $\bm{a},\bm{b}\in\Act_\Gamma$, $\bm{a}\not=\bm{b}$:
{\small
\[\begin{array}{lp{3in}}
\exists\Next\hat{\bm{a}}& every decision leads to a successor state;\\
\forall\Next\neg(\hat{\bm{a}}\wedge \hat{\bm{b}}) & no two different decisions lead to the same successor state;\\
\forall\Next\bigvee\limits_{\bm{a}\in\Act_\Gamma}\hat{\bm{a}} &  every successor state is reached by some decision.
\end{array}\]
}
}
Then the validity of 
$
\exists\Next(\hat{\bm{a}}\wedge\varphi)\Rightarrow
\forall\Next(\hat{\bm{a}}\Rightarrow\varphi)
$
for all $\varphi$ expresses the determinism of {\em global} decisions $\bm{a}\in\Act_\Ag$ as discussed above: successor states are unique up to their definable properties.
An $s\in AP$ defines a nonempty set of decisions by $\Gamma$ iff
\begin{equation}\label{decisionS}
\delta_\Gamma(s)\defeq\bigvee\limits_{\bm{a}\in\Act_\Gamma}\forall\Next(\hat{\bm{a}}\Rightarrow s)\wedge\bigwedge\limits_{\bm{a}\in\Act_\Gamma}\forall\Next(\hat{\bm{a}}\Rightarrow s)\vee\forall\Next(\hat{\bm{a}}\Rightarrow \neg s).
\end{equation}
Hence
\begin{equation}\label{atlDelimination}
\models\atlD{\Gamma}\varphi\Leftrightarrow\exists s(\Box\delta_\Gamma(s)\wedge\forall(\Box\Next s\Rightarrow\varphi)).
\end{equation}
}

\parag{Strategy contexts with temporary coalitions}
Given our conventions about the use of systems $\bm{s}=\angles{\bm{s}_\Gamma:\Gamma\subseteq\Ag}$ of propositional variables for naming $\SPTC$, we can define strategic ability of (permanent) coalition $\Gamma$ in the {\em context} of a $\Delta\subseteq\Ag\setminus\Gamma$, implementing (their part of) a $\SPTC$ denoted by $\bm{s}$ by putting
\begin{equation}\label{defPath}
\atlD{\Gamma}^{\bm{s}}_\Delta\varphi\defeq
\atlD{\Gamma}(\Box\Next\widetilde{\bm{s}}_{\Delta}\Rightarrow\varphi).
\end{equation}
As expected, we write the dual as $\atlB{\Gamma}^{\bm{s}}\varphi\defeq\neg\atlD{\Gamma}^{\bm{s}}\neg\varphi$. This can be regarded as a temporary coalition upgrade of the semantics of {\em strategy contexts} from \cite{WaltherHW07,BrihayeLLM09,WangHY11}.

\parag{Related Work on the Specification of Strategies by Propositional Variables}  
The technique we use for the naming of decisions and strategies is a mixture of folklore an authored work. For instance, the transformation of CGMs involved was key to establishing the equivalence between the \cite{AHK97} alternating transition systems based semantics of $\ATL$s and the \cite{AHK02} CGM-based one in \cite{GJ04}. In automata theory, a similar transformation is known as the passage from automata with labels on the transitions to automata with (those same) labels appearing on the respective transitions' target states; the latter are called {\em state-based} automata. A different technique of using propositions and quantification for encoding strategies and expressing their existence was proposed in \cite{DBLP:conf/atva/Pinchinat07}.
In \cite{DBLP:journals/corr/abs-1303-0794}, we used a similar naming technique to show how validity in (a subset of epistemic) $\ATL$ reduces to validity in (epistemic) $\CTL$. In \cite{DBLP:conf/clima/GuelevD12} we applied the technique to epistemic $\ATL$ with strategy contexts, to essentially identify the contribution of individual coalition members and the contribution of the players from the strategy context towards the considered objectives. Model checking complete information $\ATL_{\mathit{sc}}$ was shown to be reducible to validity in $\QCTL^*$ by introducing atomic propositions to name states and decisions and encoding the CGM's outcome function as a formula in terms of these propositions in \cite{DBLP:conf/concur/LopesLM12}.

\section{Preference in Concurrent Game Models}
\label{preferenceSection}

We assume the preference relations $<_i$ of individual players $i$ to be strict partial orders on infinite plays, with unrelated plays being of the same value to the respective players. In much of the literature preference is a pre-order; a comprehensive discussion on modeling preference can be found in \cite{HanssonPreferenceLogic}.
\oomit{If preference is modeled as a {\em pre}order $\lesssim$, plays $\bm{v}$ and $\bm{w}$ can be incomparable in two ways: in case both $\bm{v}\lesssim\bm{w}$ and $\bm{w}\lesssim\bm{v}$, and in case neither $\bm{v}\lesssim\bm{w}$, nor $\bm{w}\lesssim\bm{v}$.} 

To facilitate algorithmic methods, we require the relations
\[\bm{v}\sim_i\bm{w}\defeq(\forall\bm{u}\in R^\infin_M(w_I))((\bm{u}<_i\bm{v}\leftrightarrow \bm{u}<_i\bm{w})\wedge (\bm{v}<_i\bm{u}\leftrightarrow \bm{v}<_i\bm{w}))\]
of indiscernibility wrt $<_i$, $i\in\Ag$, to partition $R^\infin_M(w_I)$ into finitely many $\LTL$-definable classes. 

We denote the sets of the formulas which define these indiscernibility classes by $\Theta_{I,i}$, $i\in\Ag$, where $I$ stands for {\em I}nitial like in $w_I$. We require distinct $\theta',\theta''\in\Theta_{I,i}$ to define disjoint sets of plays: $M,w_I\models\forall\neg(\theta'\wedge\theta'')$. To incorporate the preference relations $<_i$, we consider {\em extended} CGMs $M$ of the form
\begin{equation}\label{extCGM1}
\angles{W,w_I,\angles{\Act_i:i\in\Ag},o,\angles{<_i:i\in\Ag},V}
\end{equation}
Given $<_i$, we assume that the sets $\Theta_{I,i}$, become available automatically. Then tuples of the form 
\begin{equation}\label{extCGM2}
\angles{W,w_I,\angles{\Act_i:i\in\Ag},o,\angles{\angles{\Theta_{I,i},<_i}:i\in\Ag},V}
\end{equation}
with the partial orders $<_i$ now defined on the classes of plays specified by $\theta\in\Theta_{I,i}$ can serve as extended CGMs too. In the sequel we tacitly assume an arbitrary fixed extended CGM with the preferences of the players in it represented either as in (\ref{extCGM1}) or as in (\ref{extCGM2}).

The restriction to $\LTL$-definable properties is meant to provide a match with the expressive power of the TL language. Requiring the classes to be just regular $\omega$-languages is no less reasonable. This more general setting has been investigated in \cite{DBLP:conf/fossacs/BouyerBMU12}. That work also presents a compelling variety of ways to combine primitive objectives into compound ones.
Temporary coalitions are natural to occur given the ordering of the objectives in the example below.

\begin{example}\label{example1}
$1$ is a sworn enemy of $2$ but feels for its own life even more strongly. Then, if $p_i$ stands for {\em $i$ perishes}, $1$'s preferences can be expressed as follows:
\begin{equation}\label{exampleOrdering1}
(\Diamond p_1\wedge\Box\neg p_2)
<_1(\Diamond p_1\wedge\Diamond p_2)
<_1(\Box\neg p_1\wedge\Box\neg p_2)
<_1(\Box\neg p_1\wedge\Diamond p_2).
\end{equation}
With $3$ sworn enemies we have:
\begin{equation}\label{exampleOrdering2}
{\small\begin{array}{ccccc}
& & \left(\begin{array}{l}\Diamond p_1\wedge\\ \Diamond p_2 \wedge\\ \Box\neg p_3\end{array}\right)
<_1\left(\begin{array}{l}\Box\neg p_1\wedge\\ \Box\neg p_2 \wedge \\ \Box\neg p_3\end{array}\right)
<_1\left(\begin{array}{l}\Box\neg p_1\wedge\\ \Diamond p_2 \wedge\\ \Box\neg p_3\end{array}\right)\\
\left(\begin{array}{l}\Diamond p_1\wedge\\ \Box\neg p_2 \wedge\\ \Box\neg p_3\end{array}\right)
&
<_1&\mbox{\rotatebox{90}{$<_1$}} &<_1& \left(\begin{array}{l}\Box\neg p_1\wedge\\ \Diamond p_2 \wedge\\ \Diamond p_3\end{array}\right)\\
& &\left(\begin{array}{l}\Diamond p_1\wedge \\ \Box\neg p_2 \wedge\\ \Diamond p_3\end{array}\right)
<_1\left(\begin{array}{l}\Diamond p_1\wedge\\ \Diamond p_2 \wedge\\ \Diamond p_3\end{array}\right)
<_1\left(\begin{array}{l}\Box\neg p_1\wedge\\ \Box\neg p_2 \wedge\\ \Diamond p_3\end{array}\right)
\end{array}}
\end{equation}
Now, e.g., $1$ and $2$ may conspire to eliminate $3$, but their relations are bound to sour afterwards. 
\end{example}

For non-singleton coalitions $\Gamma$, we write
\oomit{$\bm{w}<_\Gamma\bm{v}\defeq\bigwedge\limits_{i\in\Gamma}\bm{w}<_i\bm{v}$,}
$\Theta_{I,\Gamma}\defeq\bigg\{\bigwedge\limits_{i\in\Gamma}\theta_i:\theta_i\in\Theta_{I,i},i\in\Gamma\bigg\}.$ We refer to the formulas from $\Theta_{I,\Gamma}$ as {\em $\Gamma$'s objectives}.

In the temporal language, we introduce the binary operators $<_i$ and $\doubleNotLess_i$, $i\in\Ag$, as follows:
\[\begin{array}{lll}
M,\bm{v}\models\varphi_1 <_i\varphi_2 & \mbox{iff} & \bm{w}_1<_i\bm{w}_2\mbox{ for all }\bm{w}_k\in R^\infin_M(\bm{v})\mbox{ s. t. }
M,\bm{w}_k, |\bm{v}|\models\varphi_k,\ k=1,2.\\
M,\bm{v}\models\varphi_1 \doubleNotLess_i\varphi_2 & \mbox{iff} & \bm{w}_1\not<_i\bm{w}_2\mbox{ for all }\bm{w}_k\in R^\infin_M(\bm{v})\mbox{ s. t. }
M,\bm{w}_k, |\bm{v}|\models\varphi_k,\ k=1,2.
\end{array}\]
For the sake of simplicity, we allow only $\PLTL$ formulas to be operands of $<_i$ and $\doubleNotLess_i$, with no embedded branching time constructs. 

In words, $M,\bm{v}\models\varphi_1 <_i\varphi_2$ means that $i$ prefers $\varphi_2$ plays to $\varphi_1$ ones. Unlike $\neg(\varphi_1 <_i\varphi_2)$,  $\varphi_1 \doubleNotLess_i\varphi_2$ holds only if no $\varphi_1$ play is preferable to a $\varphi_2$ one. We abbreviate $\bigwedge\limits_{i\in\Gamma}\varphi<_i\psi$ and  $\bigwedge\limits_{i\in\Gamma}\varphi\doubleNotLess_i\psi$ by $\varphi<_\Gamma\psi$ and $\varphi\doubleNotLess_\Gamma\psi$, respectively.

According to \cite{van2005preference}, the non-temporal archetype of $<_i$ can be traced back to \cite{vonWrightPreference}. It is just one of 8 preference operators featured in \cite{van2005preference}, the one about the case in which {\em all} the $\varphi_1$-plays and {\em all} the $\varphi_2$-plays are related by the {\em strict} ordering $<_i$ on individual plays. The remaining 7 operators are defined by changing {\em all} to {\em some}, and changing $<_i$ to the non-strict $\lesssim_i$. Similar operators have been investigated in the setting of discrete contact spaces and relational syllogistic in \cite{DBLP:journals/jancl/BalbianiTV07,DBLP:journals/jolli/IvanovV12}. 
In \cite{Lorini2010}, the state space consists of the global strategy profiles. The graded preference relations $\gtrsim_i^k$, $k\leq n$, relate the reference profile to the profiles which are of {\em quality $k$ or higher} to player $i$, $k\in\{0,\ldots,n\}$, and are used to define unary modalities in the standard way. The binary operator $\lesssim_i$ is defined by the clause
\[\varphi\lesssim_i\psi\defeq\bigwedge\limits_{k\leq n}\angles{\gtrsim_i^k}\varphi\Rightarrow\angles{\gtrsim_i^k}\psi.\]

\section{Encoding Solution Concepts in $\ATL^*$}
\label{encodingSolutionConcepts}

To conveniently refer to objectives as they stand at the beginning of time $\I\defeq\neg\yesterday\top$, we write 
\[[\theta]\defeq\past(\I\Rightarrow\theta)\ .\]
In this section we use the notation for temporary coalitions to express some solution concepts in $\QCTL^*$ with the preference operator $<_i$.

\parag{The Core}
According to \cite{CooperativeGameTheory2012}, the {\em core} of a game is a set of decisions which is preferable to all coalitions. Let $\theta_*$ define the plays which can be the outcome of carrying out a core $\SPTC$. Then, assuming that $\theta_*$ is achieved by $\SPTC$ $r$, we must have, for any alternative $\SPTC$ $s$, 
\[(M^T)_{\bm{r},\ \ \bm{s}}^{\bm{X}(r),\bm{X}(s)},w_I^T\models
\forall(\Box\Next\widetilde{\bm{r}}_{\Ag}\Rightarrow\theta_*)
\wedge\bigwedge\limits_{\theta_\circ\in\Theta_{I,\Ag}}\forall(\Box\Next\widetilde{\bm{s}}_{\Ag}\Rightarrow\theta_\circ)\Rightarrow\bigwedge\limits_{i\in\Ag}
[\theta_\circ\wedge\neg\theta_*]<_i[\theta_*]\]

\parag{Domination with Temporary Coalitions}
With no assumptions on coalitioning, a strategy $s\in S_{\{i\}}$ is {\em dominant}, if it brings achievements which, in the view of player $i$, are better than those guaranteed by any other strategy, regardless of what the other players do. One simple way to generalize this to temporary coalitions is to require the outcome to be better for the coalition members, regardless of what the non-members of the coalitions entered by $i$ in the various parts of plays do. To express that a certain $\SPTC$ $r$ is dominant for $i$, we need to state that any strategy profile $s$ is either the same as $r$ to $i$, (assuming the resilience of the coalitions in which $i$ participates, or performs worse than $r$ for $i$: 
\[(M^T)_{\bm{r},\ \ \bm{s}}^{\bm{X}(r),\bm{X}(s)},w_I^T\models
\left(\begin{array}{l}
\forall\Box\bigwedge\limits_{i\in\Gamma\subseteq\Ag}(\bm{s}_\Gamma\Leftrightarrow \bm{r}_\Gamma)\vee
\\

\bigwedge\limits_{\theta'\in\Theta_{I,i}}\left(\begin{array}{l}
\forall\bigg(\Box\Next\bigvee\limits_{i\in\Gamma\subseteq\Ag}\bm{s}_\Gamma\Rightarrow[\theta']\bigg)
\Rightarrow\\
\bigvee\limits_{\theta''\in\Theta_{I,i}}[\theta']<_i[\theta'']\wedge
\forall\bigg(\Box\Next\bigvee\limits_{i\in\Gamma\subseteq\Ag}\bm{r}_\Gamma\Rightarrow[\theta'']\bigg)\end{array}
\right)
\end{array}\right)
\]
Note that, since the variables from $\bm{s}$ denote a $\SPTC$, the variables $\bm{s}_\Gamma$ from $\Box\Next\bigvee\limits_{i\in\Gamma\subseteq\Ag}\bm{s}_\Gamma$ can each be true for at most one $\Gamma$ at any state. The same holds about $\bm{r}_\Gamma$. Importantly, the above formula does not state that the players who are joining $i$ for the various steps of plays according to $r$ are interested in doing so.

For an $r$ to be dominant for all $i\in\Ag$, an arbitrary profile $s$ should either coincide with $r$, or perform worse than $r$ for all $i\in\Ag$, i.e., in formulas, $(M^T)_{\bm{r},\ \ \bm{s}}^{\bm{X}(r),\bm{X}(s)},w_I^T\models D(\bm{s},\bm{r})$ where 
\[
D(\bm{s},\bm{r})\defeq \left(\begin{array}{ll}\forall\Box\bigwedge\limits_{\Gamma\subseteq\Ag}\bm{s}_\Gamma\Leftrightarrow\bm{r}_\Gamma\vee\\
\bigwedge\limits_{\bm{\theta}'\in\prod\limits_{i\in\Ag}\Theta_{I,i}}
\left(\begin{array}{l}
\bigwedge\limits_{i\in\Ag}\forall\bigg(\Box\Next\bigvee\limits_{i\in\Gamma\subseteq\Ag}\bm{s}_\Gamma\Rightarrow[\bm{\theta}'_i]\bigg)\Rightarrow\\
\bigvee\limits_{\bm{\theta}''\in\prod\limits_{i\in\Ag}\Theta_{I,i}}\bigg(
\bigwedge\limits_{i\in\Ag}[\bm{\theta}_i']<_i[\bm{\theta}_i'']
\wedge
\bigwedge\limits_{i\in\Ag}\forall\bigg(\Box\Next\bigvee\limits_{i\in\Gamma\subseteq\Ag}\bm{r}_\Gamma\Rightarrow[\bm{\theta}''_i]\bigg)\bigg)
\end{array}\right)
\end{array}\right)\]
Let $\{\Gamma_0,\ldots,\Gamma_{2^{|\Ag|-1}}\}\defeq{\mathcal P}(\Ag)$. Then
the existence of a dominant profile can be expressed in $\QCTL^*$ with $<_i$ by the condition 
\[M^T,w_I^T\models\exists\bm{r}_{\Gamma_0}\ldots\exists\bm{r}_{\Gamma_{2^{|\Ag|-1}}}
(
\tilde{\delta}(\bm{r})\wedge
\forall\bm{s}_{\Gamma_0}\ldots\forall\bm{s}_{\Gamma_{2^{|\Ag|-1}}}(\tilde{\delta}(\bm{s})\Rightarrow D(\bm{s},\bm{r}))
)
.\]

\oomit{
\parag{Nash Equilibrium Generalized to Temporary Coalitions}
Without coalitioning, a strategy profile $s\in S_\Ag$ is a {\em Nash equilibrium}, if, for any $i\in\Ag$ and any $s'\in S_{\{i\}}$, the outcome of following $s|_{-\{i\}}\cup\{i\mapsto s\}$ instead of $s$ is no better to $i$ than following $s$. A thorough study of Nash equilibrium for the most important classes of $\omega$-regular ordered objectives with no coalitioning, and without the use of a logical language, can be found in \cite{DBLP:journals/corr/BouyerBMU15,DBLP:conf/fossacs/BouyerBMU12}. 

The existence of multiple coalitions, each holding together by mutual interest does not make sense for Nash equilibrium, which can be implemented only by an advance agreement between all the players whose behaviour can make a difference, and therefore is indeed a warranty for the viability and longevity of a grand coalition of sorts. Assuming further subordinated alliances within such a coalition makes sense only if these alliances are necessary organisationally, and not just strategically. Therefore, since deviating from an equilibrium (aka {\em defection}) by a single player $i$ implies the breakup of $i$'s various temporary alliances, the use of $\SPTC$ instead of plain global strategy profiles here only gives an additional degree of expressiveness on how defections may happen. The varying structure of coalitions allows stating that a defection on behalf of player $i$ automatically entails the defection of some other players too, namely the ones who are organisationally bound with $i$ at the various steps of plays.

As expected, in order to specify the defining property of equilibria, we need to compare the benefits of following a hypothetical equilibrium profile $e$ and a profile $s$ which differs from it in the behaviour of the coalitions which include some potential defector $i$ in the various parts of plays. A $\SPTC$ $e$ is equilibrium, if defection and the implied breakup of $i$'s temporary alliances is profitable to no player $i$. This can be written as 
\[(M^T)_{\bm{e}}^{\bm{X}(e)},w_I^T\models
\forall\bm{s}_{\Gamma_0}\ldots\forall\bm{s}_{\Gamma_{2^{|\Ag|-1}}}
\left(\begin{array}{l}
\tilde{\delta}(\bm{s})\wedge
\overbrace{\forall\Box\bigwedge\limits_{i\not\in\Gamma\subseteq\Ag}(\bm{e}_\Gamma\Leftrightarrow\bm{s}_\Gamma)}^E\Rightarrow\\
\bigwedge\limits_{\theta\in\Theta_{I,i}}\left(
\forall(\Box\Next\widetilde{\bm{s}}_\Ag\Rightarrow[\theta])\Rightarrow
\left(\begin{array}{l}
\forall(\Box\Next\widetilde{\bm{e}}_\Ag\Rightarrow[\theta])\vee\\
\bigvee\limits_{\theta'\in\Theta_{I,i}\setminus\{\theta\}}([\theta]<_i[\theta']\wedge
\forall(\Box\Next\widetilde{\bm{e}}_\Ag\Rightarrow[\theta']))
\end{array}\right)
\right)
\end{array}\right)
\]
The formula marked $E$ above states that $\bm{e}$ and $\bm{s}$ agree on all coalitions and agendas but the ones involving $i$ at all times. This means that, if, according to $\bm{e}$, $i$ belongs to coalition $\Gamma$ at some step, then according to $\bm{s}$ at this step the members of $\Gamma$ are organized in some coalitions whose union equals $\Gamma$, and the players outside $\Gamma$ all behave according to $\bm{e}$. More general forms of defection would take the concept even further from ordinary Nash equilibrium.

\parag{Backward Induction}
Given a finite play $\bm{w}$, it is natural to attempt determining, for every player $i$, the objectives $\Theta\subseteq\Theta_{I,i}$ such that $\bigvee\Theta$ is maximal wrt $<_i$ among the disjunctions of this form which can be achieved by $i$, if the rest of the players behave rationally along the hypothetical continuations of $\bm{w}$. Mind that $\Theta_{I,i}$, $i\in\Ag$, consist of $\LTL$ formulas which denote pairwise incompatible objectives. Let us denote this subset of $\Theta_{I,i}$ by $\Theta_{\bm{w},i}^{\max{}}$. How possible it is to determine $\Theta_{\bm{w},i}^{\max{}}$ depends on the notion of rationality because the dependency between what would be rational for $i$ and what would be rational for the others may happen to be a vicious circle. Furthermore, with infinite plays, the reachability of an objective can sometimes simply hang in the balance forever. Backward induction is about calculating $\Theta_{\bm{w},i}^{\max{}}$, and the correspondinge optimal behaviour after finite run $\bm{w}$, provided that it is known how desirable each of the one-state continuations $\bm{w}o(\bm{w}^{|\bm{w}|-1},\bm{a})$, $\bm{a}\in\Act_\Ag$, of $\bm{w}$ is to each player, with desirability defined by comparison of the respective $\Theta_{\bm{w}o(\bm{w}^{|\bm{w}|-1},\bm{a}),i}^{\max{}}$. 

Determining $\Theta_{\bm{w},i}^{\max{}}$ is immediate at terminal finite $\bm{w}$. A play $\bm{w}$ is {\em terminated for player $i$}, if each of $i$'s objectives has been either achieved or failed for good:
\[M,\bm{w}\models\bigwedge\limits_{\theta\in\Theta_{I,i}}\forall\Next[\theta]\vee\forall\Next\neg[\theta].\]
In general, a $\bm{w}$ can be terminal wrt only {\em some disjunctions of the given objectives}. Let $\bm{w}$ be terminated for $i$ and let
$\Theta_{\bm{w},i}^+\defeq\bigcap
\{\Theta\subseteq\Theta_{I,i}:M,\bm{w}\models\forall\Next[\bigvee\Theta]\}$, $\Theta_{\bm{w},i}^-\defeq\{\theta\in\Theta_{I,i}:M,\bm{w}\models\forall\Next\neg[\theta]\}$.
Then 
$\Theta_{\bm{w},i}^+\subseteq
\Theta_{\bm{w},i}^{\max{}}\subseteq\Theta_{I,i}\setminus \Theta_{\bm{w},i}^-$.
Let $\maxval_{i,\theta}\in AP$ for $i\in\Ag$ and $\theta\in\Theta_{I,i}$, and let the intended meaning of $V(w,\maxval_{i,\theta})$ be that $\theta\in\Theta_{\bm{w},i}^{\max{}}$ for all $\bm{w}$ such that $\bm{w}^{|\bm{w}|-1}=w$. Then the following axioms apply to $\maxval_{i,\theta}$:
{\small\begin{align}
\label{BI1}\tag{BI1}
\maxval_{i,\theta}
\Rightarrow\neg\maxval_{i,\theta'},\mbox{ if }\theta<_i\theta'\mbox{ or }\theta'<_i\theta;\\
\label{BI2}\tag{BI2}
\forall\Next[\theta]\wedge\bigwedge\limits_{\theta'\in\Theta_{I,i},\ \theta<_i\theta'} \neg\maxval_{i,\theta'}
\Rightarrow\maxval_{i,\theta},
\qquad
\forall\Next\neg[\theta]\Rightarrow\neg\maxval_{i,\theta}\mbox{ for }\theta\in\Theta_{I,i}\\
\label{BI3}\tag{BI3}
\atlD{\Gamma}\Next\bigg(\bm{\crat}_{-\Gamma}\Rightarrow\bigwedge\limits_{i\in\Gamma}\maxval_{i,\theta_i}\bigg)\wedge
\bigwedge\limits_{\Delta\subseteq\Ag,k\in\Gamma\cap\Delta,\theta_k'\in\Theta_{I,k},\theta_k<\theta_k'}\atlB{\Delta}\Next\bigg(\bm{\crat}_{-\Delta}\wedge\bigvee\limits_{k\in\Gamma\cap\Delta}\neg\maxval_{k,\theta_k'}\bigg)\\
\nonumber
\qquad\Rightarrow\bigwedge\limits_{i\in\Gamma}\maxval_{i,\theta_i}
\end{align}
}

\noindent
\ref{BI1} states that maximal achievements are pairwise incomparable. The first axiom \ref{BI2} state that if objective $\theta$ has been achieved for good, and no superior $\theta'$ can be achieved, then $\theta\in\Theta_{\bm{w},i}^{\max{}}$. The second axiom \ref{BI2} excludes failed objectives from $\Theta_{\bm{w},i}^{\max{}}$. \ref{BI3} encodes the backward induction step: Assuming that the non-members of $\Gamma$ act rationally, let $\Gamma$ be able to extend $\bm{w}\in R^\fin_M(w_I)$ by a state from where $\theta_i$ is a maximal achievement for $i$, $i\in\Gamma$. Let this be the best $\Gamma$ can do, that is, for any other hypothetical coalition $\Delta$, the players $k\in\Gamma\cap\Delta$ can do nothing so good as ascertaining the prospect to achieve $\theta_k$. Then $\theta_i$, $i\in\Gamma$, is the best the members of $\Gamma$ can achieve (together) after $\bm{w}$.
}

\section{Axioms for $<_i$ and $\doubleNotLess_i$}
\label{axioms}

The axioms in this section are valid without the assumption on $<_i$ to be partitioning $R^\infin_M(w_I)$ into finitely many $\LTL$-definable indiscernibility classes. 
$P1$ expresses extensionality and is the form of ${\bf K}$ that applies to binary modalities. The axioms $P2$ state that $<_\Gamma$ and $\doubleNotLess_\Gamma$ are closed under disjunctions on both sides. $P3$ and $P4$ state that $<_i$ is irreflexive and transitive, respectively. The axioms $P5$ state that $<_\Gamma$ and $\doubleNotLess_\Gamma$ trivially hold, if one of the operands is $\bot$. \ref{P6} states that players stick to their preferences. Below $\sigma$ stands for $<$ or $\doubleNotLess$.
\begin{align}
\label{P1}\tag{P1}
& \varphi_1\sigma_\Gamma\psi_1\wedge\forall\Next(\varphi_2\Rightarrow\varphi_1)
\wedge\forall\Next(\psi_2\Rightarrow\psi_1)\Rightarrow\varphi_2\sigma_\Gamma\psi_2\\
\label{P2}\tag{P2}
& \varphi_1\sigma_\Gamma\psi\wedge\varphi_2\sigma_\Gamma\psi\Leftrightarrow(\varphi_1\vee\varphi_2)\sigma_\Gamma\psi,\qquad\varphi\sigma_\Gamma\psi_1\wedge\varphi\sigma_\Gamma\psi_2\Leftrightarrow\varphi\sigma_\Gamma(\psi_1\vee\psi_2)\\
\label{P3}\tag{P3}
& \varphi<_\Gamma\psi\Rightarrow\forall\Next\neg(\varphi\wedge\psi)\qquad\varphi<_\Gamma\psi\Rightarrow\psi\doubleNotLess_\Gamma\varphi\qquad
\varphi\doubleNotLess_\Gamma\psi\Rightarrow\neg(\varphi<_\Gamma\psi)\\
\label{P4}\tag{P4}
& \varphi<_\Gamma\psi\wedge\psi<_\Gamma\chi\wedge\exists\Next\psi\Rightarrow\varphi<_\Gamma\chi\\
\label{P5}\tag{P5}
& \bot\sigma_\Gamma\varphi,\qquad \varphi\sigma_\Gamma\bot\\
\label{P6}\tag{P6}
& [\varphi]\sigma_\Gamma[\psi]\Leftrightarrow\forall\Box([\varphi]\sigma_\Gamma[\psi])
\end{align}
According to \cite{van2005preference}, $P2$ and a variant of $P4$ without the conjunctive member $\exists\Next\psi$ can be traced back to \cite{vonWrightPreference}. Interestingly, that variant of $P4$ is unsound in our semantics: From $\varphi_1<_\Gamma\bot$ and $\bot<_\Gamma\varphi_2$, $P4$ with the $\exists\Next\bot$ (which is false) deleted entails $\varphi_1<_\Gamma\varphi_2$.

The meaning of $[\theta]$ after finite plays which render $\theta$ either ultimately failed ($M^T,\bm{w}\models\forall\neg[\theta]$) or achieved ($M^T,\bm{w}\models\forall[\theta]$) reflects removing from sight objectives that are no longer relevant. The example below illustrates this.  
\begin{example}
Consider the 3-player setting from Example \ref{example1} with the ordering of $1$'s objectives given in (\ref{exampleOrdering2}). 
Suppose that $p_3$ occurs.  Then the objectives from (\ref{exampleOrdering2}) which include $\Box\neg p_3$ become forfeited. The others simplify to objectives which no longer mention $p_3$ and are ordered as in (\ref{exampleOrdering1}). Formally, \ref{P1} entails that, e.g., $[\Diamond p_1\wedge\Box\neg p_2\wedge\Diamond p_3]<_1
[\Diamond p_1\wedge\Diamond p_2\wedge\Diamond p_3]$ is equivalent to
\[((\past p_3\vee\necpast\neg p_3\wedge\Diamond p_3)\wedge[\Diamond p_1\wedge\Box\neg p_2])
<_i
((\past p_3\vee\necpast\neg p_3\wedge\Diamond p_3)\wedge[\Diamond p_1\wedge\Diamond p_2]).\]
Now a somewhat longer deduction entails the validity of
\[p_3\Rightarrow ([\Diamond p_1\wedge\Box\neg p_2\wedge\Diamond p_3]<_1
[\Diamond p_1\wedge\Diamond p_2\wedge\Diamond p_3]\Leftrightarrow [\Diamond p_1\wedge\Box\neg p_2]<_1
[\Diamond p_1\wedge\Diamond p_2]).\]
Similarly, once $p_2$ occurs too, (\ref{exampleOrdering1}) simplifies to  
 just $[\Diamond p_1]<_1[\Box\neg p_1]$ because then the continuations of the play do not satisfy $[\Diamond p_1\wedge\Box\neg p_2]$ and $[\Box\neg p_1\wedge\Box\neg p_2]$, and $\Diamond p_2$ from $[\Diamond p_1\wedge\Diamond p_2]$ and $[\Box\neg p_1\wedge\Diamond p_2]$ simplifies to $\top$. If $\Box(\past p_2\wedge \past p_3\Rightarrow p_1\vee\Box\neg p_1)$, {\em nothing can kill $1$ but $2$ or $3$}, definitely a solid reason for $1$ to hate $2$ and $3$, is valid in the model, then $1$ can relax after $p_2$ and $p_3$ occur.
\end{example}

\section{Reasoning with Finitely Many Given Objectives}
\label{finitelyManyObjectives}

As stated in Section \ref{preferenceSection}, we assume $<_i$ to be partitioning plays into finitely many classes of pairwise indiscernible plays, and we assume these classes to be definable by some given finite set of formulas $\Theta_{I,i}$ for every $i\in\Ag$.
Consider the axioms
\begin{align}
\label{O1new}\tag{O1} &
\exists\Next(\varphi\wedge[\theta])\Rightarrow(\varphi\sigma_\Gamma\psi\Leftrightarrow(\varphi\vee[\theta])\sigma_\Gamma\psi),\qquad \exists\Next(\psi\wedge[\theta])\Rightarrow(\varphi\sigma_\Gamma\psi\Leftrightarrow\varphi\sigma_\Gamma(\psi\vee[\theta]))\\
\label{O3}\tag{O2}&
\bigvee\limits_{\theta\in\Theta_{I,\Gamma}}[\theta],\qquad
\bigwedge\limits_{\theta_1,\theta_2\in\Theta_{I,\Gamma}}
\forall\Box([\theta_1]\Leftrightarrow[\theta_2])\vee\forall\Box\neg([\theta_1]\wedge[\theta_2])\\
\label{O4}\tag{O3}&
[\theta_1]<_\Gamma[\theta_2],\mbox{ resp. }[\theta_1]\doubleNotLess_\Gamma[\theta_2],\mbox{ for }\theta_1,\theta_2\mbox{ such that }\theta_1<_\Gamma\theta_2,\mbox{ resp. }\theta_1\not<_\Gamma\theta_2,\mbox{ is given.}
\end{align}
for $\theta,\theta_1,\theta_2\in\Theta_{I,\Gamma}$, $\Gamma\subseteq\Ag$, $\sigma\in\{<,\doubleNotLess\}$.

The axioms from Section \ref{axioms} and this section are complete for $<_i$ and $\doubleNotLess_i$ in $\CTL^*$ with $<_i$ and $\doubleNotLess_i$, $i\in\Ag$, relative to validity in (just) $\CTL^*$. The full deductive power of the axioms about $<_i$ and $\doubleNotLess_i$ from Section \ref{axioms} is not really necessary for the case of finitely many given objectives. The completeness proof is based on the possibility to eliminate the occurrences of $<_i$, $\doubleNotLess_i$ and this way show that formulas with $<_i$ have $<_i$- and $\doubleNotLess_i$-free equivalents. Hence the small model property is inherited from $\CTL^*$ without $<_i$ and $\doubleNotLess_i$.

\begin{lemma}\label{usefulThms}
Let $\varphi_1$ and $\varphi_2$ be $\PLTL$ formulas, and let $\Gamma\subseteq\Ag$. Then the formulas below, $\sigma\in\{<,\doubleNotLess\}$, are derivable in $\CTL^*$ by the axioms from Section \ref{axioms} and this section:
{\small 
\begin{align}
\label{T5}\tag{$\mathrm{E}_\sigma$}
& \varphi_1\sigma_\Gamma\varphi_2\Leftrightarrow\bigvee\limits_{\Theta_1,\Theta_2\subseteq\Theta_{I,\Gamma}}\bigwedge\limits_{k=1,2}\bigg(\bigwedge\limits_{\theta_k\in\Theta_k}\exists\Next([\theta_k]\wedge\varphi_k)
\wedge
\bigwedge\limits_{\theta_k\in\Theta_{I,\Gamma}\setminus\Theta_k}\forall\neg\Next([\theta_k]\wedge\varphi_k)\bigg)\wedge\bigwedge\limits_{\theta_1\in\Theta_1,\theta_2\in\Theta_2}[\theta_1]\sigma_\Gamma[\theta_2]
\end{align}
}
\end{lemma}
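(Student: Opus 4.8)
The plan is to derive \ref{T5} by reducing each operand of $\sigma_\Gamma$ to the class markers $[\theta]$, $\theta\in\Theta_{I,\Gamma}$, that the operand actually \emph{touches}, and then gluing the resulting primitive comparisons $[\theta_1]\sigma_\Gamma[\theta_2]$ together propositionally. Only \ref{P1}, \ref{P2}, \ref{P5} from Section~\ref{axioms} and the axiom (O1) together with the exhaustiveness half $\bigvee_{\theta\in\Theta_{I,\Gamma}}[\theta]$ of (O2) are needed; \ref{P3}, \ref{P4}, \ref{P6} and (O3) are not used, since \ref{T5} stops at the primitive comparisons rather than evaluating them. Note first that the two side conditions occurring in \ref{T5} are complementary: since $\exists$ abbreviates $\neg\forall\neg$, we have $\forall\neg\Next([\theta]\wedge\varphi_k)\Leftrightarrow\neg\exists\Next([\theta]\wedge\varphi_k)$. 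Writing $E^k_\theta\defeq\exists\Next([\theta]\wedge\varphi_k)$ and
\[H_k(\Theta_k)\defeq\bigwedge_{\theta\in\Theta_k}E^k_\theta\wedge\bigwedge_{\theta\in\Theta_{I,\Gamma}\setminus\Theta_k}\neg E^k_\theta,\quad C(\Theta_1,\Theta_2)\defeq\bigwedge_{\theta_1\in\Theta_1,\theta_2\in\Theta_2}[\theta_1]\sigma_\Gamma[\theta_2],\]
each $H_k(\Theta_k)$ is a full case description over the finitely many $E^k_\theta$, so $\bigvee_{\Theta_1,\Theta_2\subseteq\Theta_{I,\Gamma}}(H_1(\Theta_1)\wedge H_2(\Theta_2))$ is a propositional tautology, and the right-hand side of \ref{T5} is precisely $\bigvee_{\Theta_1,\Theta_2}(H_1(\Theta_1)\wedge H_2(\Theta_2)\wedge C(\Theta_1,\Theta_2))$.

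The core is to show, for fixed $\Theta_1,\Theta_2$ and under the hypothesis $H_1(\Theta_1)\wedge H_2(\Theta_2)$, that $\varphi_1\sigma_\Gamma\varphi_2\Leftrightarrow C(\Theta_1,\Theta_2)$. To reduce the left operand I apply (O1) once for each $\theta\in\Theta_1$; its premise is $E^1_\theta$, available from $H_1(\Theta_1)$, and after some markers $[\theta]$, $\theta\in S$, have been added, the premise for the next $\theta'\in\Theta_1\setminus S$ still holds, because $\varphi_1\wedge[\theta']\Rightarrow(\varphi_1\vee\bigvee_{\theta\in S}[\theta])\wedge[\theta']$ and $\exists\Next$ is monotone. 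This gives $\varphi_1\sigma_\Gamma\varphi_2\Leftrightarrow(\varphi_1\vee\bigvee_{\theta_1\in\Theta_1}[\theta_1])\sigma_\Gamma\varphi_2$, and splitting the left disjunction by \ref{P2} yields the implication $\varphi_1\sigma_\Gamma\varphi_2\Rightarrow(\bigvee_{\theta_1\in\Theta_1}[\theta_1])\sigma_\Gamma\varphi_2$. For the converse, $H_1(\Theta_1)$ supplies $\forall\Next(\varphi_1\Rightarrow\neg[\theta])$ for each $\theta\notin\Theta_1$, while $\forall\Next$-generalisation applied to the exhaustiveness axiom $\bigvee_{\theta\in\Theta_{I,\Gamma}}[\theta]$ of (O2) supplies $\forall\Next\bigvee_{\theta\in\Theta_{I,\Gamma}}[\theta]$; together these give $\forall\Next(\varphi_1\Rightarrow\bigvee_{\theta_1\in\Theta_1}[\theta_1])$, and \ref{P1} then turns $(\bigvee_{\theta_1\in\Theta_1}[\theta_1])\sigma_\Gamma\varphi_2$ back into $\varphi_1\sigma_\Gamma\varphi_2$. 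Distributing once more by \ref{P2} establishes $\varphi_1\sigma_\Gamma\varphi_2\Leftrightarrow\bigwedge_{\theta_1\in\Theta_1}[\theta_1]\sigma_\Gamma\varphi_2$. The mirror-image argument, using the right-hand variants of (O1), \ref{P1} and \ref{P2}, reduces $\varphi_2$ inside each $[\theta_1]\sigma_\Gamma\varphi_2$ and delivers $C(\Theta_1,\Theta_2)$. When $\Theta_1=\emptyset$ or $\Theta_2=\emptyset$ the relevant disjunction is $\bot$ and the empty conjunction $C$ is $\top$; here \ref{P5} makes $\bot\sigma_\Gamma\varphi_2$ (resp. $\varphi_1\sigma_\Gamma\bot$) valid, matching $C=\top$. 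All intermediate operands stay $\PLTL$, so every $\sigma_\Gamma$ occurring above is well formed.

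It remains to glue the cases. For each pair $(\Theta_1,\Theta_2)$ we now have the theorem $H_1(\Theta_1)\wedge H_2(\Theta_2)\Rightarrow(\varphi_1\sigma_\Gamma\varphi_2\Leftrightarrow C(\Theta_1,\Theta_2))$, and we have the tautology $\bigvee_{\Theta_1,\Theta_2}(H_1(\Theta_1)\wedge H_2(\Theta_2))$. Treating $\varphi_1\sigma_\Gamma\varphi_2$, the $H_k(\Theta_k)$ and the $C(\Theta_1,\Theta_2)$ as propositional atoms, the schema ``$\{h^j\Rightarrow(a\Leftrightarrow c^j)\}_j$ together with $\bigvee_j h^j$ entails $a\Leftrightarrow\bigvee_j(h^j\wedge c^j)$'' is a propositional tautology; instantiating it gives exactly \ref{T5}.

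The only genuinely delicate point is the iterated application of (O1): one must verify that adding a class marker to an operand preserves the premise needed to add the next, which is what the monotonicity of $\exists\Next$ secures. Everything else is routine --- the complementarity of the two side conditions, the empty-set cases closed by \ref{P5}, and the single propositional tautology of the previous paragraph. It is worth stressing that (O3) is never invoked, so \ref{T5} is derivable for an arbitrary assignment of the primitive comparisons; it is precisely the reduction on which the subsequent elimination of $<_i$ and $\doubleNotLess_i$ (which does use (O3)) rests.
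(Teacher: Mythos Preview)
Your proof is correct and follows essentially the same route as the paper's own argument: a case split over the ``touched'' classes $\Theta_1,\Theta_2$, iterated use of (O1) to enlarge each operand by the relevant $[\theta]$'s (with the monotonicity of $\exists\Next$ securing the successive premises), then the exhaustiveness half of (O2) together with \ref{P1} to collapse the operand to $\bigvee_{\theta\in\Theta_k}[\theta]$, and finally \ref{P2} to distribute into the conjunction $C(\Theta_1,\Theta_2)$ before gluing the cases propositionally. The paper's derivation is organised as two intermediate theorems (the case-conditioned equivalence and the tautological case cover) whose conjunction yields \ref{T5}; your write-up merges these into a single pass and is a bit more explicit about the iterated (O1) step and the empty-$\Theta_k$ boundary via \ref{P5}, but the substance is the same.
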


\oomit{
\begin{lemma}\label{usefulThms}
Let $\Theta,\Theta_1,\Theta_2\subseteq\Theta_{I,\Gamma}$, $\Gamma\subseteq\Ag$. Then the formulas below are derivable in $\CTL^*$ by the axioms from Section \ref{axioms} and this section.
{\small 
\begin{align}
\label{T2}\tag{T1}
&\bigwedge\limits_{k=1,2}\bigwedge\limits_{\theta_k\in\Theta_k}\exists\Next([\theta_k]\wedge\varphi_k)
\wedge
\bigwedge\limits_{\theta_k\in\Theta_{I,\Gamma}\setminus\Theta_k}\forall\neg\Next([\theta_k]\wedge\varphi_k)
\Rightarrow\bigg(\varphi_1\sigma_\Gamma\varphi_2\Leftrightarrow\bigwedge\limits_{\theta_1\in\Theta_1,\theta_2\in\Theta_2}[\theta_1]\sigma_\Gamma[\theta_2]\bigg) & \sigma\in\{<,\doubleNotLess\}\\
\label{T3}\tag{T2}
& \bigvee\limits_{\Theta\subseteq\Theta_{I,\Gamma}}\bigg(\bigwedge\limits_{\theta\in\Theta}\exists\Next([\theta]\wedge\varphi)\wedge
\bigwedge\limits_{\theta\in\Theta_{I,\Gamma}\setminus\Theta}\forall\Next\neg([\theta]\wedge\varphi)\bigg)\\
\label{T5}\tag{T3}
& \varphi_1\sigma_\Gamma\varphi_2\Leftrightarrow\\
\nonumber
& \bigvee\limits_{\Theta_1,\Theta_2\subseteq\Theta_{I,\Gamma}}\bigwedge\limits_{k=1,2}\bigg(\bigwedge\limits_{\theta_k\in\Theta_k}\exists\Next([\theta_k]\wedge\varphi_k)
\wedge
\bigwedge\limits_{\theta_k\in\Theta_{I,\Gamma}\setminus\Theta_k}\forall\neg\Next([\theta_k]\wedge\varphi_k)\bigg)\wedge\bigwedge\limits_{\theta_1\in\Theta_1,\theta_2\in\Theta_2}[\theta_1]\sigma_\Gamma[\theta_2] & \sigma\in\{<,\doubleNotLess\}
\end{align}
}
\end{lemma}

\begin{proof}
(\ref{T2}):
Let $\Theta_k=\{\theta_{k,1},\ldots,\theta_{k,n_k}\}$. Then 
{\small
\[\begin{array}{rll}
1 & \exists\Next\bigg(\bigg(\varphi_1\vee\bigvee\limits_{l=1}^{m-1}[\theta_{1,l}]\bigg)\wedge[\theta_{1,m}]\bigg)\Rightarrow\bigg(\bigg(\varphi_1\vee\bigvee\limits_{l=1}^{m-1}[\theta_{1,l}]\bigg)\sigma_\Gamma\varphi_2\Leftrightarrow\bigg(\varphi\vee\bigvee\limits_{l=1}^m[\theta_{1,l}]\bigg)\sigma_\Gamma\varphi_2\bigg) & \ref{O1new}, m=1,\ldots,n_1\\
2 & \exists\Next(\varphi_1\wedge[\theta_{1,m}])\Rightarrow\exists\Next\bigg(\bigg(\varphi_1\vee\bigvee\limits_{l=1}^{m-1}[\theta_{1,l}]\bigg)\wedge[\theta_{1,m}]\bigg) & \CTL^*, m=1,\ldots,n_1\\
3 & \bigwedge\limits_{\theta_1\in\Theta_1}
\exists\Next(\varphi_1\wedge[\theta_1])
\Rightarrow\bigg(\varphi_1\sigma_\Gamma\varphi_2\Leftrightarrow\bigg(\varphi_1\vee\bigvee\limits_{\theta_1\in\Theta_1}[\theta_1]\bigg)\sigma_\Gamma\varphi_2\bigg) & 1, 2,\ \Theta_1=\{\theta_{1,1},\ldots,\theta_{1,n_1}\}\\
4 & \bigwedge\limits_{\theta_1\in\Theta_1}\exists\Next(\varphi_1\wedge[\theta_1])\wedge
\bigwedge\limits_{\theta_1\in\Theta_{I,\Gamma}\setminus\Theta_1}\forall\Next\neg(\varphi_1\wedge[\theta_1])\wedge\forall\Next\bigg(\bigvee\limits_{\theta\in\Theta_{I,\Gamma}}[\theta]\bigg)\Rightarrow\forall\Next\bigg(\varphi_1\Rightarrow\bigvee\limits_{\theta_1\in\Theta_1}[\theta_1]\bigg) & \CTL^*
\\
5 & \forall\Next\bigg(\bigvee\limits_{\theta\in\Theta_{I,\Gamma}}[\theta]\bigg) & \ref{O3}\\
6 & \forall\Next\bigg(\varphi_1\Rightarrow\bigvee\limits_{\theta_1\in\Theta_1}[\theta_1]\bigg)\Rightarrow\bigg(
\bigg(\varphi_1\wedge\bigvee\limits_{\theta_1\in\Theta_1}[\theta_1]\bigg)\sigma_\Gamma\varphi_2\Leftrightarrow
\bigg(\bigvee\limits_{\theta_1\in\Theta_1}[\theta_1]\bigg)\sigma_\Gamma\varphi_2\bigg) & \ref{P1}\\
7 & \bigwedge\limits_{\theta_1\in\Theta_1}\exists\Next(\varphi_1\wedge[\theta_1])\wedge
\bigwedge\limits_{\theta_1\in\Theta_{I,\Gamma}\setminus\Theta_1}\forall\Next\neg(\varphi_1\wedge[\theta_1])\Rightarrow\bigg(\varphi_1\sigma_\Gamma\varphi_2\Leftrightarrow\bigg(\bigvee\limits_{\theta_1\in\Theta_1}[\theta_1]\bigg)\sigma_\Gamma\varphi_2\bigg) & 1-6\\
8 & \bigwedge\limits_{\theta_2\in\Theta_2}\exists\Next(\varphi_2\wedge[\theta_2])\wedge
\bigwedge\limits_{\theta_2\in\Theta_{I,\Gamma}\setminus\Theta_2}\forall\Next\neg(\varphi_2\wedge[\theta_2])\Rightarrow\\
&
\bigg(\bigg(\bigvee\limits_{\theta_1\in\Theta_1}[\theta_1]\bigg)\sigma_\Gamma\varphi_2\Leftrightarrow\bigg(\bigvee\limits_{\theta_1\in\Theta_1}[\theta_1]\bigg)\sigma_\Gamma\bigg(\bigvee\limits_{\theta_2\in\Theta_2}[\theta_2]\bigg)\bigg) & \mbox{similar to }7\\
9 & \bigwedge\limits_{k=1,2}\bigwedge\limits_{\theta_k\in\Theta_k}\exists\Next(\varphi_k\wedge[\theta_k])\wedge
\bigwedge\limits_{\theta_k\in\Theta_{I,\Gamma}\setminus\Theta_k}\forall\Next\neg(\varphi_k\wedge[\theta_k])\Rightarrow
\bigg(\varphi_1\sigma_\Gamma\varphi_2\Leftrightarrow\bigg(\bigvee\limits_{\theta_1\in\Theta_1}[\theta_1]\bigg)\sigma_\Gamma\bigg(\bigvee\limits_{\theta_2\in\Theta_2}[\theta_2]\bigg)\bigg) & 7,8\\
\end{array}\]
}

\noindent
\ref{T3} is a tautology up to the use of $\forall$ as the dual of $\exists$. \ref{T5} is an immediate corollary of \ref{T2} and \ref{T3}. 
\end{proof}
}
The theorem below follows from the possibility to eliminate arbitrary uses of $<_i$ and $\doubleNotLess_i$ by means of \ref{T5} and resolve uses with the operands being designated objectives using \ref{O4}.
\begin{theorem}[relative completeness]
Every formula in $\CTL^*$ with $<_i$ and $\doubleNotLess_i$ has a $<_i$- and $\doubleNotLess_i$-free equivalent in $\CTL^*$. The equivalence is derivable by the axioms and rules from Section \ref{axioms} and this section.  
\end{theorem}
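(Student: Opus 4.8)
The plan is to turn the stated equivalence into a terminating rewriting procedure driven by Lemma~\ref{usefulThms} and the ``given'' ordering axioms, and then to invoke the congruence (replacement-of-equivalents) rule of $\CTL^*$ to lift the local eliminations to the whole formula. The decisive structural fact is that the operands of $<_i$ and $\doubleNotLess_i$ are restricted to $\PLTL$: consequently every preference subformula $\varphi_1\sigma_i\varphi_2$ occurring in a $\CTL^*$ formula $\Phi$ is \emph{flat}, i.e. its operands contain no further preference operators and no branching-time constructs. It therefore suffices to exhibit, for each such flat atom, a provably equivalent $<_i$- and $\doubleNotLess_i$-free $\CTL^*$ formula; replacing each atom by that equivalent inside $\Phi$ then yields the required preference-free equivalent. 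The replacement is legitimate because the atoms sit at state-formula positions, where ordinary $\CTL^*$ congruence applies; the operand-level extensionality one would need for replacements \emph{inside} the operands, were it ever invoked, is in any case guaranteed by \ref{P1}. Since the coalition operators $\sigma_\Gamma$ are mere abbreviations for conjunctions of the individual $\sigma_i$, a formula of the logic contains only the individual atoms $\varphi_1\sigma_i\varphi_2$, so I work with $\Gamma=\{i\}$ throughout.

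First I would apply the equivalence \ref{T5} of Lemma~\ref{usefulThms} to the flat atom $\varphi_1\sigma_i\varphi_2$. Its right-hand side is a Boolean combination of two kinds of conjuncts: the formulas $\exists\Next([\theta_k]\wedge\varphi_k)$ and $\forall\neg\Next([\theta_k]\wedge\varphi_k)$, which are already $\CTL^*$ formulas free of $<_i,\doubleNotLess_i$ because $[\theta_k]\wedge\varphi_k$ is $\PLTL$; and the \emph{base comparisons} $[\theta_1]\sigma_i[\theta_2]$ with $\theta_1,\theta_2\in\Theta_{I,i}$. Thus \ref{T5} reduces the elimination problem for an arbitrary flat atom to the elimination of comparisons between the designated objectives alone.

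Next I would resolve each base comparison $[\theta_1]\sigma_i[\theta_2]$ to $\top$ or $\bot$. Since $\Theta_{I,i}$ enumerates the finitely many indiscernibility classes and their ordering is given, for each pair $\theta_1,\theta_2$ the axiom \ref{O4} provides exactly one of the theorems $[\theta_1]<_i[\theta_2]$ or $[\theta_1]\doubleNotLess_i[\theta_2]$, according to whether $\theta_1<_i\theta_2$ holds or not. The complementary negative cases come from \ref{P3}: from the derived $[\theta_1]\doubleNotLess_i[\theta_2]$ its clause $\varphi\doubleNotLess_i\psi\Rightarrow\neg(\varphi<_i\psi)$ yields $\neg([\theta_1]<_i[\theta_2])$, and dually the contrapositive of that same clause turns the derived $[\theta_1]<_i[\theta_2]$ into $\neg([\theta_1]\doubleNotLess_i[\theta_2])$. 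Hence every base comparison is provably equivalent to $\top$ or $\bot$, the right-hand side of \ref{T5} collapses to a preference-free $\CTL^*$ formula, and so does the atom $\varphi_1\sigma_i\varphi_2$.

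The procedure terminates trivially: flatness means there is no nesting of preference operators, so a single pass over the atoms of $\Phi$ suffices, and no atom reintroduced by \ref{T5} escapes the designated-objective form discharged in the previous step. I expect the main point needing care to be precisely this interface between the two steps — confirming that \ref{T5} leaves only comparisons of the shape $[\theta_1]\sigma_i[\theta_2]$ with $\theta_1,\theta_2\in\Theta_{I,i}$, so that \ref{O4} always applies, together with the routine but easy-to-overlook appeal to \ref{P3} for the negative cases (the naive reading that \ref{O4} alone settles a comparison would leave the \emph{false} comparisons unresolved). The remaining obligation, that substitution of provable equivalents preserves $\CTL^*$-provable equivalence of the enclosing formula, is the standard congruence property and presents no real difficulty.
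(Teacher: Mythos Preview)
Your approach is essentially the paper's: apply the equivalence $\mathrm{E}_\sigma$ from Lemma~\ref{usefulThms} to every preference subformula, then discharge the residual base comparisons $[\theta_1]\sigma_i[\theta_2]$ using the given-ordering axiom \ref{O4}. The paper's own (one-sentence) argument leaves the negative cases implicit; your explicit appeal to the \ref{P3} clause $\varphi\doubleNotLess_i\psi\Rightarrow\neg(\varphi<_i\psi)$ and its contrapositive to turn the ``wrong'' instance supplied by \ref{O4} into the required $\bot$ is a genuine refinement rather than a deviation, and your flatness remark correctly exploits the paper's syntactic restriction on the operands of $\sigma_i$ to avoid any inductive treatment of nesting.
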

\oomit{
\begin{proof}
By virtue of \ref{T3}, subformulas with $<_\Gamma$ or $\doubleNotLess_\Gamma$ as the main connective can be replaced by formulas where $<_\Gamma$ and $\doubleNotLess_\Gamma$ are used only with operands from $\Theta_{I,\Gamma}$. However, by virtue of \ref{P6} and \ref{O4}, the latter are equivalent to $\top$ or $\bot$, depending only on the choice of the operands from $\Theta_{I,\Gamma}$.
\end{proof}
}

\oomit{\noindent
Note that, despite that the theorem applies to CGMs and $\ATL^*$, the equivalents it claims to exist are actually $\CTL^*$ formulas. Even with $\ATL^*$ formulas, if $\Act_i$, $i\in\Ag$ are known, then the equivalence (\ref{atlDelimination}) allows an equivalent to the given formula to be found in $\QCTL^*$.
}

\section{Reasoning with No Model-supplied System of Objectives}

This section is about the relative completeness of the axioms about $<_i$ from Section \ref{axioms}, with no assumptions on the properties of indiscernibility wrt $<_i$ in the considered models. The result shows that the satisfiability of any given a formula $\varphi$ in $\CTL^*$ in extended CGMs with the indiscernibility induced by $<_i$ not necessarily having finite index, is equivalent to the satisfiability of $\varphi$ in an extended CGM with a finite system of objectives which can be determined from $\varphi$. These objectives are boolean combinations of the operands of the occurrences of $<_i$ in $\varphi$.

\oomit{
Mind that the result does not include $\doubleNotLess_i$. Axiomatizing $<_i$ and $\doubleNotLess_i$ together appears to require a rule similar to Gabbay's irreflexivity rule \cite{Gabbay81}, to implicitly axiomatise the introduction of negative occurrences of the existential quantifier in the following characterisation of $\not<_i$: 
\[\models\chi'\not<_i\chi''\Leftrightarrow\exists P((\chi''\wedge P)<_i(\chi'\wedge P))\]
In the formula on the RHS of $\Leftrightarrow$ above $P$ is assumed to range over the continuations of the reference finite play. Since atomic propositions in $\LTL$ evaluate to sets of states, and not plays, $P$ cannot be treated as a propositional variable. This makes it difficult to port the rule-based technique from, e.g., \cite{DBLP:journals/jancl/BalbianiTV07,DBLP:journals/jolli/IvanovV12}.
}

We exclude the axioms from Section \ref{finitelyManyObjectives}. 
Instead we add two axioms about the interaction of $<_i$ and $\doubleNotLess_i$ with the {\em separated normal form} \cite{Gab89} of $\PLTL$ formulas and the 
{\em guarded normal form} in (future) $\LTL$. Let $\sigma\in\{<,\doubleNotLess\}$. Let $\{g_0,\ldots,g_{2^{|AP|}-1}\}$ be the set of all the conjunctions of the form $\bigwedge\limits_{p\in AP}\varepsilon_p p$ where $\varepsilon_p$ is either $\neg$ or nothing for each $p\in AP$. Then 
\begin{align}\label{P10}\tag{P7}
\bigg(\bigwedge\limits_{k'}\yesterday\pi_{k'}'\Rightarrow\varphi_{k'}'\bigg)\sigma_\Gamma\bigg(\bigwedge\limits_{k''}\yesterday\pi_{k''}''\Rightarrow\varphi_{k''}''\bigg)\Leftrightarrow\bigwedge\limits_{k',k''}(\pi_{k'}'\wedge\pi_{k''}''\Rightarrow\varphi_{k'}'\sigma_\Gamma\varphi_{k''}'')\\
\label{inductiveP}\tag{P8}
\bigg(\bigvee\limits_{k<2^{|AP|}}g_k\wedge\Next\varphi^t_k\bigg)\sigma_\Gamma\bigg(\bigvee\limits_{k<2^{|AP|}}g_k\wedge\Next\psi^t_k\bigg)\Leftrightarrow\forall\Next\bigg(\bigvee\limits_{k<2^{|AP|}} g_k\wedge\varphi^t_k\sigma_\Gamma\psi^t_k\bigg)
\end{align}
Formulas $\varphi$ which are consistent with \ref{P1},\ldots,\ref{inductiveP}, are satisfiable in a model where the preference relations partition the set of the infinite plays into finitely many classes of pairwise indiscernible plays. This means that our relative completeness result applies to the class of models of the form (\ref{extCGM2}) with finite systems of objectives too. Note that propositional quantification and $\doubleNotLess_i$ are not included in the completeness result below. Proving the exact form of the completeness result involves a finite set of instances of the axioms \ref{P1},\ldots,\ref{inductiveP}, which depends on the formula $\varphi$ whose satisfiability is considered. We denote their conjunction by $\Ax_\varphi$.

\oomit{
We start from a formula $\varphi$ which we assume to be consistent with the axioms from Section \ref{axioms} and the set of all the valid $\ATL^*$ formulas which can be written in the vocabulary $AP\defeq\Var(\varphi)$. We use the axioms about $<_i$ to transform $\varphi$ into an equivalent formula $\varphi'$ in which the operands of the $<_i$-subformulas have a special form. Then we replace the occurrences of the $<_i$-subformulas in $\varphi'$ by dedicated fresh atomic propositions and obtain a $\ATL^*$ formula $\varphi''$. Because of the consistency of $\varphi$ with the set of the $\ATL^*$ formulas $\varphi''$ is satisfiable in some CGM $M$ whose vocabulary includes $Var(\varphi)$ and the newly introduced atomic propositions. Finally we use the valuations of these atomic propositions to define $<_i$ on the plays in $M$ in a way which makes $M$ satisfy the original formula $\varphi$.
}
\oomit{
Instead of subscribing to a hypothetical complete proof system for the variant of $\ATL^*$ we have chosen, we assume that, given a vocabulary $AP$, any $\ATL^*$ formula which is written in $AP$ without the extending operators and is not the negation of a valid $\ATL^*$ formula written in $AP$ is satisfiable in some CGM for $AP$. Before going to the proof itself, we give some preliminaries in Sections \ref{sep} and \ref{gnfSection}.

\subsection{Separation} 
\label{sep}
Formulas in $\PLTL$ admit the form
\[\bigwedge\limits_k\pi_k\Rightarrow\Next\varphi_k\]
where $\pi_k$ are {\em past} (no occurrences of $\Next$ and $\until..$), and $\pi_k$ are {\em future} (no occurrences of $\yesterday$ and $\since..$) formulas \cite{Gab89}. The transformation into this form is known as {\em separation}. It can be shown that formulas in $\PLTL$ admit the form $\bigwedge\limits_k\yesterday\pi_k\Rightarrow\varphi_k$ too. As noticed in \cite{DBLP:journals/tcs/BozzelliMS20}, separation facilitates removing the past temporal connectives from the scope of $\atlD{.}$ without losing expressive power: 
\begin{align}\label{P9}
\models\atlD{\Gamma}\bigg(\bigwedge\limits_k\pi_k\Rightarrow\Next\varphi_k\bigg)\Leftrightarrow\bigwedge\limits_k(\pi_k\Rightarrow\atlD{\Gamma}\Next\varphi_k).
\end{align}
Similar equivalences apply to $<_i$ and $\doubleNotLess_i$:
\begin{align}\label{P10}\tag{P7}
\models\bigg(\bigwedge\limits_{k'}\yesterday\pi_{k'}'\Rightarrow\varphi_{k'}'\bigg)\sigma_\Gamma\bigg(\bigwedge\limits_{k''}\yesterday\pi_{k''}''\Rightarrow\varphi_{k''}''\bigg)\Leftrightarrow\bigwedge\limits_{k',k''}(\pi_{k'}'\wedge\pi_{k''}''\Rightarrow\varphi_{k'}'\sigma_\Gamma\varphi_{k''}''),\ \sigma\in\{<,\doubleNotLess\}.
\end{align}
Therefore we may assume that the operands of the $<_i$- and $\doubleNotLess_i$-subformulas of $\varphi$ are all future formulas without loss of generality.

\subsection{Guarded Normal Form (GNF) and Fisher-Ladner Closure in Future $\LTL$}
\label{gnfSection}

Let $\{g_0,\ldots,g_{2^{|AP|}-1}\}$ be the set of all the conjunctions of the form $\bigwedge\limits_{p\in AP}\varepsilon_p p$ where $\varepsilon_p$ is either $\neg$ or nothing for each $p\in AP$. Then future formulas $\varphi$ written in $AP$ admit the form
\begin{equation}\label{gnf}
\bigvee\limits_{k<2^{|AP|}}g_k\wedge\Next\varphi^t_k\ .
\end{equation}
The formulas $\varphi^t_k$ are determined up to equivalence. In the sequel, given a future formula $\varphi$ we write $\varphi^t_k$ for the formula that corresponds to $g_k$ from any of $\varphi$'s forms (\ref{gnf}).
The $\varphi^t_k$s from (\ref{gnf}) are a special case of {\em Brzozowski derivatives} as known in formal language theory.

For a finite set $\Phi$ of future formulas, there exists a finite set of formulas of the form (\ref{gnf}) such that:

(1) every formula from $\Phi$ has an equivalent in the set;

(2) if $\psi$ is in the set, then $\psi^t_k$, $k<2^{|AP|}$, have equivalents in the set too. 

\noindent
Given $\Phi$, we denote some fixed minimal such set by $\Cl(\Phi)$. The role of $\Cl(\Phi)$ in $\LTL$ is analogous to that of Fisher-Ladner's closure \cite{FL79} in propositional dynamic logic.

It is known that future $\varphi$ satisfy
\[\atlD{\Gamma}\bigg(\bigvee\limits_{k<2^{|AP|}}g_k\wedge\Next\varphi^t_k\bigg)\Leftrightarrow
\bigvee\limits_{k<2^{|AP|}}g_k\wedge\atlD{\Gamma}\Next\atlD{\Gamma}\varphi^t_k.\]
in $\ATL^*$. Similar axioms apply to $<_i$ and $\doubleNotLess_i$:
\begin{align}
\label{inductiveP}\tag{P8}
\varphi\sigma_\Gamma\psi\Leftrightarrow\forall\Next\bigg(\bigvee\limits_{k<2^{|AP|}} g_k\wedge\varphi^t_k\sigma_\Gamma\psi^t_k\bigg),\ \sigma\in\{<,\doubleNotLess\}
\end{align}
Axiom \ref{inductiveP} takes the role of Axiom \ref{P6}, in case the objectives are not formulated wrt the beginning of time. It expresses how objectives evolve over time. The separated forms of $[\varphi]$ and $[\psi]$ from \ref{P6} can be used for analyzing the evolution of objectives too. 

Importantly, given some future formulas $\psi_1,\ldots,\psi_n$ and $\chi\defeq\bigwedge\limits_{l=1}^n\varepsilon_l\psi_l$ where $\varepsilon_l$ is either $\neg$ or nothing, $l=1,\ldots,n$, we have
\[\chi^t_k\Leftrightarrow\bigwedge\limits_{l=1}^n\varepsilon_l(\psi_l)^t_k.\]
This entails that
\[\Cl\left(\left\{\bigwedge\limits_{\varphi\in\Phi}\varepsilon_\varphi\varphi:\varepsilon_\varphi\mbox{ is either }\neg\mbox{ or nothing, }\varphi\in\Phi\right\}\right)=
\left\{\bigwedge\limits_{\varphi\in\Cl(\Phi)}\varepsilon_\varphi\varphi:\varepsilon_\varphi\mbox{ is either }\neg\mbox{ or nothing, }\varphi\in\Cl(\Phi)\right\}\]
Hence, $\Cl(\Phi)$ can be closed under boolean combinations does not cause the closure to become infinite.

\subsection{Some Useful Theorems}
\begin{lemma} The following formulas are derivable from the axioms about $<_i$ and $\doubleNotLess_i$ from Section \ref{axioms}:
\begin{align}
\label{P7}\tag{T4} & 
\forall\Next\varphi\wedge\psi\sigma_\Gamma(\chi\wedge\varphi)\Rightarrow\psi\sigma_\Gamma\chi\\
\nonumber
& \forall\Next\varphi\wedge(\psi\wedge\varphi)\sigma_\Gamma\chi\Rightarrow\psi\sigma_\Gamma\chi\\
\label{exists1}\tag{T5} & 
\neg(\psi_1\sigma_i\psi_2)\Rightarrow\exists\Next\psi_k,\ k=1,2
\end{align}
\end{lemma}
\begin{proof} \ref{P7}:
\[\begin{array}{rll}
1 & \forall\Next\varphi\wedge
\forall\Next(\varphi\Rightarrow(\chi\Rightarrow(\chi\wedge\varphi)))\Rightarrow
\forall\Next(\chi\Rightarrow(\chi\wedge\varphi)) & \CTL^*\\
2 & \forall\Next(\chi\Rightarrow(\chi\wedge\varphi))\wedge\psi\sigma_\Gamma(\chi\wedge\varphi)\Rightarrow\psi\sigma_\Gamma\chi & \ref{P1}\\
3 & \forall\Next\varphi\wedge\psi\sigma_\Gamma(\chi\wedge\varphi)\Rightarrow\psi\sigma_\Gamma\chi & 1, 2
\end{array}\]
The second formula from \ref{P7} is derived similarly. \ref{exists1}:
\[\begin{array}{rll}
1 & \forall\Next\neg\psi_1\wedge(\psi_1\wedge\neg\psi_1)\sigma_i\psi_2\Rightarrow\psi_1\sigma_i\psi_2 & \ref{P7}\\
2 & (\psi_1\wedge\neg\psi_1)\sigma_i\psi_2 & P1, P5\\
3 & \neg(\psi_1\sigma_i\psi_2)\Rightarrow\neg \forall\Next\neg\psi_1 & 1, 2
\end{array}\]
The deduction for $\neg(\psi_1\sigma_i\psi_2)\Rightarrow\neg \forall\Next\neg\psi_2$ is similar. 
\end{proof}

\subsection{The Completeness Proof}
Let $\psi_1'<_{i_1}\psi_1''$,\ldots, $\psi_m'<_{i_m}\psi_m''$ be all the subformulas of $\varphi$ which have $<_i$, $i\in\Ag$, as the main connective. Let $\Psi=\Cl(\{\psi_1',\psi_1'',\ldots,\psi_m',\psi_m''\})$. Let $\mathrm{X}$ be the set of all the conjunctions of the form $\bigwedge\limits_{\psi\in\Psi}\varepsilon_\psi\psi$ where each $\varepsilon_\psi$ stands for either $\neg$ or nothing. Then every $\psi\in\Psi$ is equivalent to the disjunction $\bigvee \mathrm{X}_\psi$ for some suitable $\mathrm{X}_\psi\subseteq \mathrm{X}$. By the repeated use of \ref{P2}, it can be shown that
\begin{equation}\label{p2Use}
\vdash\psi_k'<_i\psi_k''\Leftrightarrow\bigwedge\limits_{\chi'\in\mathrm{X}_{\psi_k'},\chi''\in\mathrm{X}_{\psi_k''}}\chi'<_i\chi''.
\end{equation}
The equivalence (\ref{p2Use}) entails that $\varphi$ has an equivalent in which $<_i$ is applied only to formulas from $\mathrm{X}$.
Let $\varphi'$ be the result of replacing each subformula $\psi_k'<_i\psi_k''$ by its corresponding conjunction from the RHS of (\ref{p2Use}). We extend the vocabulary $AP$ by the atomic propositions
\begin{equation}\label{auxat}
\mathsf{p}^{<_i}_{\chi',\chi''},\ \chi',\chi''\in\mathrm{X},i\in\Ag, \mbox{ for }\chi'<_i\chi''
\end{equation}
Consider the instances of the axioms and theorems about $<_i$ and $\doubleNotLess_i$ below and their translations written in terms of $\mathsf{p}^{<_i}_{\chi',\chi''}$ instead of $\chi'<_i\chi''$ for all $\chi',\chi''\in\mathrm{X}$:
\begin{equation}\label{auxinst}
\begin{array}{lll}
\ref{P3} & \chi'<_i\chi''\Rightarrow\forall\Next\neg(\chi'\wedge\chi'')  & \mathsf{p}^{<_i}_{\chi',\chi''}\Rightarrow\forall\Next\neg(\chi'\wedge\chi'')\\
[3mm]
\ref{P4} & 
\chi'<_i\chi''\wedge\chi''<_i\chi'''\wedge\exists\Next\chi''\Rightarrow\chi'<_i\chi'' &
\mathsf{p}^{<_i}_{\chi',\chi''}\wedge\mathsf{p}^{<_i}_{\chi'',\chi'''}\wedge\exists\Next\chi''\Rightarrow\mathsf{p}^{<_i}_{\chi',\chi'''}\\
[3mm]
\ref{inductiveP} & \chi'<_i\chi''\Leftrightarrow\forall\Next\bigg(\bigvee\limits_{k<2^{|AP|}} g_k\wedge(\chi')^t_k<_i(\chi')^t_k\bigg)
& \mathsf{p}^{<_i}_{\chi',\chi''}\Leftrightarrow\forall\Next\bigg(\bigvee\limits_{k<2^{|AP|}} g_k\wedge\mathsf{p}^{<_i}_{(\chi')^t_k,(\chi')^t_k}\bigg)
\end{array}
\end{equation}
Let $\Ax$ stand for the conjunction of the above formulas written in terms of $<_i$.
If the given $\varphi$ is consistent wrt the axioms from Section \ref{axioms}, then so is
\[\varphi''\defeq\lambda(\forall\Box\Ax\wedge\varphi')\mbox{ where $\lambda$ is the substitution
}[\mathsf{p}^{<_i}_{\chi',\chi''}/\chi'<_i\chi'':\chi',\chi''\in\mathrm{X},i\in\Ag].\]
Let $M$ be a CGM and $\bm{w}\in R^\infin_M(w_I)$. Let $M,\bm{w}\models\varphi''$. To obtain a model of the form (\ref{extCGM1}) which satisfies the original formula $\varphi$, we define the relations $<_i$, $i\in\Ag$, on $R^\infin_M(w_I)$ as follows:
\begin{equation}\label{prefDef}
\begin{array}{ll}
\bm{w}_1<_i\bm{w}_2&\mbox{iff there exist some }\chi_1,\chi_2\in\mathrm{X}\mbox{ and an }l<\omega\mbox{ such that }\\
& \bm{w}_1^0\ldots\bm{w}_1^l=\bm{w}_2^0\ldots\bm{w}_2^l,\\
& M,\bm{w}_k,l+1\models\chi_k,\ k=1,2,\mbox{ and }\\
& M,\bm{w}_1^0\ldots\bm{w}_1^l\models\mathsf{p}^{<_i}_{\chi_1,\chi_2}.
\end{array}
\end{equation}
The conditions of this definition can be satisfied by different pairs of formulas $\chi_1$ and $\chi_2$ and different $l$. Below we use the instances of \ref{inductiveP} in $\lambda\Ax$ to prove that this cannot lead to conflicting conclusions about $\bm{w}_1<_i\bm{w}_2$.
\begin{lemma}
\label{correctness}
Let $\bm{w}_1^0\ldots\bm{w}_1^{l_m}=\bm{w}_2^0\ldots\bm{w}_2^{l_m}$ and $M,\bm{w}_k,l_m\models\chi_{k,m}$ for some $\chi_{k,m}\in\mathrm{X}$, $k,m=1,2$. Then $M,\bm{w}_1^0\ldots\bm{w}_1^{l_m}\models\mathsf{p}^{<_i}_{\chi_{1,m},\chi_{2,m}}$ holds either for both $m=1$ and $m=2$, or for no $m$.
\end{lemma}
\begin{proof}
Given a $w\in W$, let $\hat{w}\defeq\bigwedge\limits_{p\in AP,V(w,p)} p\wedge\bigwedge\limits_{p\in AP,\neg V(w,p)}\neg p$. Then $\hat{w}$ is one of $g_0,\ldots,g_{2^{|AP|}-1}$ from (\ref{gnf}) for all  $w\in W$. Assume that $l_1<l_2$. Let $n_{k,l}$ be such that $\hat{\bm{w}}_k^l=g_{n_{k,l}}$, $k=1,2$, $l=l_1+1,\ldots,l_2$. Let
\[
\eta_{k,l_1+1}\defeq\chi_{k,1}\mbox{ and }\eta_{k,l+1}\defeq(\eta_{k,l})_{n_{k,l}}^t,\ l=l_1 + 1,\ldots,l_2.
\]
Then, since $\eta_{k,l+1}$ corresponds to $g_{n_{k,l}}=\hat{\bm{w}}_k^l$ in the GNF of $\eta_{k,l}$, $l=l_1 + 1,\ldots,l_2$, $M,\bm{w}_k,l+1\models\eta_{k,l+1}$, $l=l_1+1,\ldots,l_2$, $k=1,2$. 

Since $\eta_{k,l_2+1}\in\mathrm{X}$ and the formulas from $\mathrm{X}$ are pairwise inconsistent, 
and both $\eta_{k,l_2+1}$ and $\chi_{k,2}$ are true at position $l_2+1$ in $\bm{w}_k$ in $M$, $\eta_{k,l_2+1}$ can only be $\chi_{k,2}$, $k=1,2$. 

Furthermore, by the instances of \ref{inductiveP} in $\lambda\Ax$,  $M,\bm{w}_1^0\ldots\bm{w}_1^{l-1}\models\mathsf{p}^{<_i}_{\eta_{1,l},\eta_{2,l}}$ entails $M,\bm{w}_1^0\ldots\bm{w}_1^l\models\mathsf{p}^{<_i}_{\eta_{1,l+1},\eta_{2,l+1}}$, $l=l_1+1,\ldots,l_2$. Hence 
 $M,\bm{w}_1^0\ldots\bm{w}_1^{l_1}\models\mathsf{p}^{<_i}_{\chi_{1,1},\chi_{2,1}}$ entails $M,\bm{w}_1^0\ldots\bm{w}_1^{l_2}\models\mathsf{p}^{<_i}_{\chi_{1,2},\chi_{2,2}}$.
 
Similarly, by the instances of \ref{inductiveP} in $\lambda\Ax$,  $M,\bm{w}_1^0\ldots\bm{w}_1^{l-1}\models\neg\mathsf{p}^{<_i}_{\eta_{1,l},\eta_{2,l}}$ entails $M,\bm{w}_1^0\ldots\bm{w}_1^l\models\neg\mathsf{p}^{<_i}_{\eta_{1,l+1},\eta_{2,l+1}}$, $l=l_1+1,\ldots,l_2$, whence 
 $M,\bm{w}_1^0\ldots\bm{w}_1^{l_1}\models\neg\mathsf{p}^{<_i}_{\chi_{1,1},\chi_{2,1}}$ entails $M,\bm{w}_1^0\ldots\bm{w}_1^{l_2}\models\neg\mathsf{p}^{<_i}_{\chi_{1,2},\chi_{2,2}}$. 
\end{proof}
\begin{lemma}
\label{partialOrder}
The relations $<_i$ on $R^\infin_M(w_I)$ defined by (\ref{prefDef}) are strict partial orders.
\end{lemma}
\begin{proof}
Transitivity: Let $\bm{u},\bm{v},\bm{w}\in M^\infin_M(w_I)$ be such that $\bm{u}<_i\bm{v}$ and $\bm{v}<_i\bm{w}$. According to (\ref{prefDef}), there exist some $\chi_{\bm{u}},\chi_{\bm{v}'},\chi_{\bm{v}''},\chi_{\bm{w}}\in\rm{X}$, and some $l_1,l_2<\omega$ such that 

$\bm{u}^0\ldots\bm{u}^{l_1}=\bm{v}^0\ldots\bm{v}^{l_1}$, $\bm{v}^0\ldots\bm{v}^{l_2}=\bm{w}^0\ldots\bm{w}^{l_2}$; 

$M,\bm{v}, l_1+1\models\chi_{\bm{u}}$, $M,\bm{v}, l_1+1\models\chi_{\bm{v}'}$,  $M,\bm{v}, l_2+1\models\chi_{\bm{v}''}$ and $M,\bm{v}, l_2\models\chi_{\bm{w}}$; 

$M,\bm{v}^0\ldots\bm{v}^{l_1}\models\mathsf{p}^{<_i}_{\chi_{\bm{u}},\chi_{\bm{v}'}}$, and $M,\bm{v}^0\ldots\bm{v}^{l_2}\models\mathsf{p}^{<_i}_{\chi_{\bm{v}''},\chi_{\bm{w}}}$.

We only do the case in which $l_1\leq l_2$. The case in which $l_2<l_1$ is similar. Let $\chi_{\bm{w}}'\in\mathrm{X}$ be such that $M, \bm{w}, l_1\models\chi_{\bm{w}}'$. There exists a unique such $\chi_{\bm{w}}'\in\mathrm{X}$. Just like in the proof of Lemma \ref{correctness}, if $l_1<l_2$, from $M,\bm{v}^0\ldots\bm{v}^{l_2}\models\mathsf{p}^{<_i}_{\chi_{\bm{v}''},\chi_{\bm{w}}}$ we conclude that
$M,\bm{v}^0\ldots\bm{v}^{l_1}\models\mathsf{p}^{<_i}_{\chi_{\bm{v}'},\chi_{\bm{w}'}}$. If $l_1=l_2$, then $\chi_{\bm{w}'}$ can only be $\chi_{\bm{w}}$ itself. Furthermore, $M,\bm{v}^0\ldots\bm{v}^{l_1}\models\exists\Next\chi_{\bm{v}'}$. Hence, by the instance of $P4$ for $\chi_{\bm{w}}$, $\chi_{\bm{v}'}$ and $\chi_{\bm{w}'}$, we infer $M,\bm{v}^0\ldots\bm{v}^{l_1}\models\mathsf{p}^{<_i}_{\chi_{\bm{u}},\chi_{\bm{w}'}}$.

Irreflexivity $\bm{w}\not <_i\bm{w}$ is established by choosing a $\chi_{\bm{w}}$ such that $M,\bm{w},0\models\Next\chi_{\bm{w}}$. Then $M,w_I\models\mathsf{p}^{<_i}_{\chi_{\bm{w}},\chi_{\bm{w}}}$ contradicts $M,w_I\models\exists\Next\chi_{\bm{w}}$.
\end{proof}

}

\begin{theorem}[relative completeness]
\label{relcomp}
Let $\Ax_\varphi\wedge\varphi$ {\em not} be the negation of a formula that is valid in $\CTL^*$, assuming that the $<_i$-subformulas in $\Ax_\varphi\wedge\varphi$ are treated as atomic propositions. Then $\varphi$ is satisfiable in an extended CGM of the form (\ref{extCGM1}).
\end{theorem}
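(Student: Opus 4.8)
The plan is to reduce satisfiability of $\varphi$ in an extended CGM of the form (\ref{extCGM1}) to the $\CTL^*$-satisfiability furnished by the hypothesis, by abstracting the preference subformulas into fresh atomic propositions and then re-reading their valuation in the resulting $\CTL^*$ model as genuine strict partial orders on infinite plays. First I would normalise: using separation together with \ref{P10} I may assume that every operand of an occurrence of $<_i$ in $\varphi$ is a future ($\yesterday$-free) $\LTL$ formula, so that the guarded normal form applies to it. Collecting the operands of all $<_i$-subformulas and closing the resulting set under boolean combinations and under the guarded-normal-form components $(\cdot)^t_k$ yields a finite set $\Psi$; let $\mathrm{X}$ be the set of its atoms, i.e.\ the satisfiable conjunctions $\bigwedge_{\psi\in\Psi}\varepsilon_\psi\psi$ with each $\varepsilon_\psi$ a negation or nothing. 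Each operand is then provably equivalent to a disjunction of atoms, and by \ref{T5} (itself obtained from \ref{P2}) every $<_i$-subformula of $\varphi$ is equivalent to a conjunction of comparisons $\chi'<_i\chi''$ with $\chi',\chi''\in\mathrm{X}$. This confines all preference reasoning to the finite grid $\mathrm{X}\times\mathrm{X}$.

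Next I would name each basic comparison by a fresh proposition $\mathsf{p}^{<_i}_{\chi',\chi''}$; among the instances packaged into $\Ax_\varphi$ are those of \ref{P3}, \ref{P4} and \ref{inductiveP} for these atom pairs (together with the \ref{P2}-instances underlying \ref{T5}), rewritten with $\mathsf{p}^{<_i}_{\chi',\chi''}$ in place of $\chi'<_i\chi''$. Replacing each $<_i$-subformula of $\varphi$ by its atom-comparison form and substituting the fresh propositions turns $\Ax_\varphi\wedge\varphi$ into a pure $\CTL^*$ formula; by hypothesis it is not the negation of a $\CTL^*$ validity, hence satisfiable, and I fix a (tree) CGM $M$ realising it as the scaffold.

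On this scaffold I would define, for each $i\in\Ag$, a relation $<_i$ on the infinite plays of $M$: set $\bm{w}_1<_i\bm{w}_2$ exactly when the two plays share a finite prefix such that, at the position immediately after it, $\bm{w}_1$ satisfies some atom $\chi_1$ and $\bm{w}_2$ some atom $\chi_2$, while $\mathsf{p}^{<_i}_{\chi_1,\chi_2}$ holds at the prefix. Two things must be checked. Well-definedness: several prefixes and atom pairs may witness the clause, and they must never disagree; this is precisely the work of \ref{inductiveP}, which forces the truth value of $\mathsf{p}^{<_i}_{\chi_1,\chi_2}$ to propagate consistently along the shared prefix as the atoms evolve through the components $(\cdot)^t_k$, so that the later witness is determined by the earlier one and conversely. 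Order axioms: \ref{P3} gives irreflexivity, using that it forbids a single position from satisfying $\chi'\wedge\chi''$ when $\mathsf{p}^{<_i}_{\chi',\chi''}$ holds, and \ref{P4} gives transitivity once the two witnessing prefixes are aligned at their common later position, the conjunct $\exists\Next\psi$ of \ref{P4} supplying the required witness.

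Finally I would show that, viewing $M$ as an extended CGM (\ref{extCGM1}) equipped with these orders, the genuine semantics of $<_i$ agrees with the fresh propositions: $M,\bm{v}\models\chi_1<_i\chi_2$ holds exactly when $\mathsf{p}^{<_i}_{\chi_1,\chi_2}$ is true at $\bm{v}$. Since the atoms of $\mathrm{X}$ are pairwise exclusive and jointly exhaustive, the universal quantification over continuations in the semantic clause for $<_i$ collapses to the single atom-pair comparison, and reversing the earlier equivalences shows that $M$ satisfies the original $\varphi$. I expect the main obstacle to be the well-definedness step: the relation is read off from purely local data, namely a prefix together with an atom one step later, yet must be independent of all choices of witness, and reconciling the position at which the atoms are evaluated with the position at which $\mathsf{p}^{<_i}_{\chi_1,\chi_2}$ is read, while guaranteeing that the finite closure $\mathrm{X}$ really is closed under $(\cdot)^t_k$ so that \ref{inductiveP} always applies, is the delicate core of the argument.
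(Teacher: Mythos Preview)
Your proposal is correct and follows essentially the same route as the paper's own argument: normalise via separation (\ref{P10}), close the operand set under the guarded-normal-form derivatives to obtain the finite atom set $\mathrm{X}$, decompose each $<_i$-subformula into atom comparisons via \ref{P2}, abstract these by fresh propositions $\mathsf{p}^{<_i}_{\chi',\chi''}$, take a $\CTL^*$ model of the resulting formula, and recover $<_i$ on plays from the valuation of those propositions, with \ref{inductiveP} securing well-definedness and \ref{P3}, \ref{P4} securing the strict-partial-order axioms. The only cosmetic discrepancy is that you cite \ref{T5} for the atom decomposition, whereas that lemma lives in the finite-objectives section which the present theorem explicitly sets aside; the paper invokes \ref{P2} directly for this step, exactly as you parenthetically indicate.
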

\oomit{
\begin{proof}
Without loss of generality, $\varphi$ can be assumed to be a state formula. If not, $\exists\varphi$ can be considered instead. Since $\varphi''$ is not the negation of a valid $\CTL^*$ formula, there exists a  CGM $M$ for the extension $AP'$ of $AP$ by the atomic propositions (\ref{auxat}) and $\bm{w}\in R^\fin_M(w_I)$ be such that $M,\bm{w}\models\varphi''$. Let $<_i$ be defined on $R^\fin_M(w_I)$ as described above. Lemmata \ref{correctness} and \ref{partialOrder} entail that this leads to an extended CGM $M'$ of the form (\ref{extCGM1}). A direct check shows that $M'
,\bm{v}\models\chi_1<_i\chi_2\Leftrightarrow p^{<_i}_{\chi_1,\chi_2}$ for all $\bm{v}\in R^\fin_M(w_I)$. Hence $M',\bm{v}\models\varphi$. 
\end{proof}
}

\begin{theorem}[finite systems of objectives]
Assume that $\varphi$ is as in Theorem \ref{relcomp}.
Then $\varphi$ is satisfiable in an extended CGM of the form (\ref{extCGM2}).
\end{theorem}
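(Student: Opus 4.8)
The plan is to re-use the extended CGM $M'$ constructed in the proof of Theorem~\ref{relcomp} and to observe that, as built, its preference relations already partition the infinite plays into finitely many $\LTL$-definable indiscernibility classes; reading these classes off, together with the order induced on them, then exhibits $M'$ as a model of the form (\ref{extCGM2}) that still satisfies $\varphi$. Recall that that construction fixes a finite set $\mathrm{X}$ of Boolean combinations of the $\Cl$-closure of the operands of the $<_i$-subformulas of $\varphi$ (so that $\mathrm{X}$ is closed under $\LTL$-derivatives and every play has at every position a unique type in $\mathrm{X}$), together with auxiliary atomic propositions $\mathsf{p}^{<_i}_{\chi',\chi''}$, and defines $<_i$ on plays by declaring $\bm{w}_1<_i\bm{w}_2$ to hold iff, at \emph{some} common prefix of $\bm{w}_1$ and $\bm{w}_2$, the proposition $\mathsf{p}^{<_i}_{\chi_1,\chi_2}$ holds, where $\chi_k\in\mathrm{X}$ is the type of $\bm{w}_k$ one step past that prefix.

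The key step will be to show that this existential comparison can always be read off at the initial state $w_I$. Every play passes through $w_I$ at position $0$, so $w_I$ is a common prefix of any two plays, and the consistency established in the proof of Theorem~\ref{relcomp} (which propagates the truth value of $\mathsf{p}^{<_i}_{\cdot,\cdot}$ along a play by means of the \ref{inductiveP}-instances) shows that this value is independent of which admissible common prefix is chosen. Writing $\tau(\bm{w})\in\mathrm{X}$ for the unique type with $M',\bm{w},1\models\tau(\bm{w})$, this yields
\[\bm{w}_1<_i\bm{w}_2\quad\text{iff}\quad M',w_I\models\mathsf{p}^{<_i}_{\tau(\bm{w}_1),\tau(\bm{w}_2)},\]
so that $<_i$ factors through the map $\tau$ into the finite set $\mathrm{X}$.

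I would then harvest the indiscernibility classes. Since $\bm{u}<_i\bm{v}$ and $\bm{v}<_i\bm{u}$ depend on $\bm{v}$ only through $\tau(\bm{v})$, the relation $\sim_i$ is the pullback along $\tau$ of the equivalence $\approx_i$ on the realised types defined by letting $\chi\approx_i\chi'$ hold iff $\mathsf{p}^{<_i}_{\psi,\chi}\Leftrightarrow\mathsf{p}^{<_i}_{\psi,\chi'}$ and $\mathsf{p}^{<_i}_{\chi,\psi}\Leftrightarrow\mathsf{p}^{<_i}_{\chi',\psi}$ hold at $w_I$ for every type $\psi$ realised at position $1$. As $\mathrm{X}$ is finite, $\approx_i$ has finitely many classes, and the $\sim_i$-class corresponding to $C\in\mathrm{X}/{\approx_i}$ is defined by the $\LTL$-formula $\Next\bigvee_{\chi\in C}\chi$. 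Taking $\Theta_{I,i}$ to be the collection of these formulas for the realised classes, and ordering the classes by $C_1<_i C_2$ iff $\mathsf{p}^{<_i}_{\chi_1,\chi_2}$ holds at $w_I$ for any (equivalently, every) representatives $\chi_k\in C_k$, makes $<_i$ well defined on classes; it is a strict partial order because $<_i$ is one on plays, and distinct members of $\Theta_{I,i}$ define disjoint sets whose union is all of $R^\infin_{M'}(w_I)$ because types are unique and total. Hence $M'$ equipped with $\angles{\angles{\Theta_{I,i},<_i}:i\in\Ag}$ is of the form (\ref{extCGM2}), and the play-level relation it induces coincides with the original $<_i$, so the satisfaction of $\varphi$ is inherited from Theorem~\ref{relcomp}.

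The main obstacle will be the key step: reducing the branch-point comparison to a single comparison read at $w_I$. Establishing that $\mathsf{p}^{<_i}_{\cdot,\cdot}$ is invariant under the choice of admissible common prefix --- so that evaluation at $l=0$ is legitimate --- is precisely what makes $<_i$ factor through the finite type map $\tau$; once this is in place, finiteness, $\LTL$-definability and well-definedness of the induced order on classes are routine. A secondary point needing care is verifying that $\approx_i$ is coarse enough for the order on classes to be well defined yet fine enough to refine the genuine indiscernibility $\sim_i$.
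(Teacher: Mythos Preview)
Your proposal is correct and follows essentially the same route as the paper: both arguments take the satisfying model from Theorem~\ref{relcomp}, invoke the consistency lemma (the propagation of $\mathsf{p}^{<_i}_{\cdot,\cdot}$ along common prefixes via the \ref{inductiveP}-instances) to specialise the defining clause for $<_i$ to evaluation at the root $w_I$, and then read off a finite system of $\LTL$-definable objectives from the type set $\mathrm{X}$. The only difference is cosmetic: the paper simply takes $\Theta_{I,i}$ to be the classes $\{\bm{w}:M,\bm{w},1\models\chi\}$ for $\chi\in\mathrm{X}$ directly, whereas you additionally quotient $\mathrm{X}$ by the induced indiscernibility $\approx_i$ before defining $\Theta_{I,i}$; your extra step is harmless (indeed slightly more faithful to the requirement that $\Theta_{I,i}$ name the $\sim_i$-classes) but not needed for the form~(\ref{extCGM2}) to make sense.
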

\oomit{
\begin{proof}
Corollary \ref{relcomp} guarantees the existence of a satisfying CGM $M$.
Lemma \ref{correctness} entails that the defining clause (\ref{prefDef}) for $<_i$ which we use in the construction of $M$ can be specialised to refer only to the satisfaction of $\mathsf{p}^{<_i}_{\chi_1,\chi_2}$ at the root finite run $w_I$:
\begin{equation}\label{prefDef2}
\begin{array}{ll}
\bm{w}_1<_i\bm{w}_2&\mbox{iff there exist some }\chi_1,\chi_2\in\mathrm{X}\mbox{ such that }\\
& M,\bm{w}_k,1\models\chi_k,\ k=1,2,\mbox{ and }\\
& M,w_I\models\mathsf{p}^{<_i}_{\chi_1,\chi_2}.
\end{array}
\end{equation}
Hence, for any two $\bm{w}_1,\bm{w}_2\in R^\infin_M(w_I)$, $\bm{w}_1<_i\bm{w}_2$ can be determined from the unique pair $\chi_1,\chi_2\in\mathrm{X}$ for which $M,\bm{w}_k,1\models\chi_k,\ k=1,2$. This means that we can can define
\[\Theta_{I,i}\defeq\{\{\bm{w}\in R^\infin_M(w_I):M,\bm{w},1\models\chi\}:\chi\in\mathrm{X}\}.\]
\end{proof}
}

\section*{Concluding Remarks} 
We have proposed a way to use temporal logic for strategic reasoning about temporary coalitions, which are outside the immediate scope of the basic constructs of established logics for strategic reasoning such as $\ATL$ and $\SL$ as both $\ATL$'s $\atlD{.}$ and $\SL$'s first order language for strategies as the domain of individuals are meant to model long term individual player strategies. The proposed notation builds on the use of propositional variables to denote decisions and, more generally, strategies, and quantifying over strategies, in $\QCTL^*$. We extended the notation to capture evolving coalition structure too. Furthermore, we have extended $\CTL^*$ with a preference operator on objectives and proposed a complete set of axioms for that operator. We have illustrated the use of the notation by specifying temporary coalition variants of the solution concepts of {\em the core} and {\em dominant strategies}. 

\oomit{
In this paper we have shown how some solution concepts for multiplayer concurrent infinite games with temporary coalitions, $\LTL$-definable objectives, and preference on the objectives can be expressed in $\QCTL^*$, which is decidable on trees. We have done this by means of a binary operator for preference on temporal objectives, which is known in various forms from the literature. We have shown that, given a concrete finite system of $\LTL$-definable objectives, the operator can be expressed in terms of the basic operators of $\CTL^*$. This entails the decidability of the existence of solutions to the considered class of games for the solution concepts which can be modelled in our setting. We have shown that coalition strategies can be expressed by means of dedicated atomic propositions and propositional quantification can be then used to make statements about the existence of systems of strategies for the hypothetical temporary coalitions with the properties relevant to the modelled solution concepts. We have proposed axioms which are complete for the extension of $\CTL^*$ by the considered preference operator relative to validity in $\CTL^*$. We have established completeness results about satisfiability wrt given preferences and about the satisfiability of temporal formulas with preference wrt no designated systems of objectives. In the latter case we have shown that any given formula is satisfiable iff it is satisfiable in an extended CGMs with some appropriate finite system of objectives and some appropriate preference orders on them. The objectives and their orderings according to the preferences of players can be determined from the given formula.
}

\section*{Acknowledgement}

This work was partially supported by Contract DN02/15/19.12.2016 "Space, Time and Modality: Relational, Algebraic and Topological Models" with Bulgarian NSF.

\bibliographystyle{alpha}

\end{document}